\DeclareMathAlphabet\mathbfcal{OMS}{cmsy}{b}{n}
\newcommand{\summenproblem}{\textsc{K-Smallest Subsets}}
\newcommand{\NP}[0]{$\mathcal{NP}$}
\newcommand{\Pclass}[0]{$\mathcal{P}$}
\newcommand{\NPcOurResult}[1]{\hyperref[#1]{$\mathbfcal{NP}$\textbf{-c}}}
\newcommand{\POurResult}[1]{\hyperref[#1]{$\mathbfcal{P}$}}
\newcommand{\Wzwei}[0]{$W[2]$}
\newcommand{\bigO}[1]{\mathcal{O}(#1)}
\newcommand{\plusplus}[0]{\raisebox{0.3ex}{\tiny\textbf{++}}}
\DeclareMathOperator{\costs}{costs}
\newcommand{\Cany}{C_\textsc{any}}
\newcommand{\Cequal}{C_\textsc{equal}}
\newcommand{\Clevel}{C_\textsc{level}}
\newcommand{\Cflip}{C_\textsc{flip}}
\newcommand{\Cdist}{C_\textsc{dist}}
\newcommand{\summe}{\psi}
\newcommand{\SpecialPartition}{\textsc{Partition}}
\newtheorem{theorem}{Theorem}
\newtheorem{corollary}[theorem]{Corollary}
\newtheorem{lemma}[theorem]{Lemma}
\newtheorem{example}{Example}
\newtheorem*{remark}{Remark}
\title{\mytitle}
\begin{document}

\title{On the Hardness of Bribery Variants in Voting with CP-Nets\footnote{The final publication is available at Springer via \url{http://dx.doi.org/10.1007/s10472-015-9469-3}.}}

\author[1]{Britta Dorn}
\author[2]{Dominikus Krüger}

\affil[1]{Faculty of Science, Dept.~of Computer Science,
			University of Tübingen,
			Germany, 
			\texttt{britta.dorn@uni-tuebingen.de}}

\affil[2]{Dominikus Krüger,
			Institute of Theoretical Computer Science,
			Ulm University,
			Germany,
			\texttt{dominikus.krueger@uni-ulm.de}}

\date{}

\maketitle

\begin{abstract}
We continue previous work by Mattei \textit{et al.}~\cite{mattei2013bribery}
in which they study the computational complexity of bribery schemes when voters have conditional preferences modeled as CP-nets. For most of the cases they considered, they showed the bribery problem is solvable in polynomial time. Some cases remained open---we solve several of them and extend the previous results to the
case that voters are weighted. Additionally, we consider negative (weighted) bribery in CP-nets, when the briber is not
allowed to pay voters to vote for his preferred candidate.
\end{abstract}

\section{Introduction}
This work is based on the findings of Mattei, Pini, Rossi, and Venable~\cite{mattei2013bribery}. 
They considered the computational complexity of computing optimal bribery schemes in voting scenarios in which the voters decide on a fixed assignment for a common set of issues for which they might have conditional preferences. The typical example for this setting is the choice of a common meal consisting of several courses and drinks. Here it is a natural assumption that some voters' preference over their choice of wine is conditioned on the kind of meat being served, which again might depend on the choice of previously chosen dishes. In such a setting, the set of candidates to vote on is exponentially large in the number of common issues. A compact and convenient way to condense and represent the voters' possibly conditional preferences over this set is given by the CP-net formalism introduced by Boutilier~\textit{et al.}~\cite{boutilier2004cp}, see also earlier works by Boutilier~\textit{et al.}~\cite{boutilier1997constraint, boutilier1999reasoning}.  \\

In a CP-net, a voter's dependencies for a set of issues are modelled by a directed graph, and the conditional aspect of his preferences is expressed by so-called {\it ceteris paribus} conditional preference statements for each issue. 

Preference aggregation in CP-nets was first investigated by Rossi~\textit{et al.}~\cite{rossi2004mcp} and was further pursued in various works such as Purrington and Durfee~\cite{purrington2007making}, Li~\textit{et al.}~\cite{li2010efficient, li2011majority}, Xia~\textit{et al.}~\cite{xia2008voting}, Conitzer~\textit{et al.}~\cite{conitzer2011hypercubewise}, to name just a few. Among the approaches for voting with CP-nets, there are one-step methods that, given the voters' CP-nets, derive a linear preference order (or relevant parts of it) over the candidates and apply one of the existent rules for elections where preferences are given as linear orders. Lang~\cite{lang2007vote} proposed a sequential approach which aggregates preferences issue by issue, see also the work of Lang and Xia~\cite{lang2009sequential} and Xia~\textit{et al.}~\cite{xia2007sequential, xia2007strongly}. 
In their work, Mattei~\textit{et al.}~\cite{mattei2013bribery} use the one-step methods One-Step-Plurality ($OP$), One-Step-Veto ($OV$), and One-Step-$k$-Approval ($OK$) (and a variant called $OK^*$) which are derived from the well-known voting rules for linear preference orders, as well as the sequential voting system Sequential Majority ($SM$).\\ 

A central challenge in the field of computational social choice consists in determining the computational complexity of voting problems~\cite{brandt2013comsoc, faliszewski2009richer}.
One seeks to assess the extent to which voting systems are---in some sense---vulnerable to or resistant against manipulative actions such as strategic behavior (manipulation), election control, 
or bribery. These properties are measured in terms of computational hardness of the corresponding decision problems. For the definition of these problems and for surveys on this topic we refer to the article by Faliszewski~\textit{et al.}~\cite{faliszewski2009richer} and the bookchapter by Faliszewski~\textit{et al.}~\cite{faliszewski2010protect} on manipulation, bribery, and control, the article by Faliszewski and Procaccia~\cite{faliszewski2010war} on manipulation, the survey by Rothe and Schend~\cite{rothe2013shields} on manipulation and control, and to the bookchapters by Conitzer and Walsh~\cite{conitzer2015handbook} for the manipulation problem, and by Falizewski and Rothe~\cite{faliszewski2015handbook} for bribery and control.

In this work, we consider the {\it bribery} problem which asks whether an external agent, the briber, can influence the voters by spending money on changing their preferences over the candidates in such way that his preferred candidate wins, respecting a given budget. This problem was introduced and intensively studied by Faliszewski~\textit{et al.}~\cite{faliszewski2009hard}. The original version deals with individual but fixed prices for each voter, independent of the specific changes made if the voter is bribed. Several variants have been considered~\cite{faliszewski2009hard}, among them the model of nonuniform bribery introduced by Faliszewski~\cite{faliszewski2008nonuniform} and the model of microbribery studied by Faliszewski~\textit{et al.}~\cite{FHHR09} which may incorporate the amount of change the briber asks for. The {\sc Swap Bribery} problem introduced by Elkind~\textit{et al.}~\cite{elkind2009swap} additionally takes into account  the ranking aspect of the voters' preferences by assigning costs for swapping two consecutive candidates in a preference order.

Given that bribery of an election represents some kind of dishonest behavior, one is interested in deriving hardness results for voting systems with respect to the computational complexity of the bribery problem, even if this only constitutes a worst case analysis. 

However, the bribery problem can also be considered in a positive way in terms of convincing voters by (possibly) cost-involving actions like campaigning to change their votes, and is referred to as {\it election campaign management} then, see the works of Elkind~\textit{et al.}~\cite{elkind2009swap}, Elkind and Faliszewski~\cite{elkind2010approximation}, Schlotter~\textit{et al.}~\cite{schlotter2011campaign}, and Baumeister~\textit{et al.}~\cite{baumeister2012lazy}.\\
Considering the bribery problem as a campaign management problem, one is of course less interested in obtaining computational hardness than in finding efficient polynomial time algorithms.

In this work, we further investigate the complexity of the bribery problem in the setting of voting with CP-nets as initiated by Mattei~\textit{et al.}~\cite{mattei2013bribery}. The difference to the `classical' setting where preferences are given as linear orders is that the briber does not directly execute changes in the preference orders of the voters, but in the CP-nets, i.e., he can affect the dependencies. From the point of view of campaign management, this seems very natural as well. One can easily imagine a systematic campaign to convince voters to drop dependencies in their preferences, without having to deal with their implicit preference order. For instance, if a voter prefers some special kind of wine in the case that meat is served as a main dish, one might run a campaign for the quality of the alternative varieties of wine offered and therewith directly affect the CP-net, without having to care about the induced preference order.\\
Bribery in CP-nets has also been investigated by Maran~\textit{et al.}~\cite{maran2013framework} in the context of interaction and influence among voters and by Pini~\textit{et al.}~\cite{pini2013bribery} in connection to representation of the voters' preferences via soft constraints.\\

In the `classical' setting of unconditional preferences, the basic bribery problem is solvable in polynomial time for the voting rule $k$-approval~\cite{faliszewski2009hard}, and therefore also for its special cases plurality and veto. In contrast, the {\sc Swap Bribery} problem for $k$-approval is solvable in polynomial time if all swaps have a cost of one, but becomes \NP-complete as soon as different costs are allowed, already for the case $k=2$ (see the work of Dorn and Schlotter~\cite{dorn2012multivariate}, Betzler and Dorn~\cite{BD10} and Elkind~\textit{et al.}~\cite{elkind2009swap}); the complexity hence depends---additionally to the voting system used---on the amount of change that the briber has to pay for. 

For bribery in the CP-net setting, the amount of change that the briber has to pay for is incorporated by a cost scheme. Mattei~\textit{et al.} introduce five cost schemes, called $\Cequal$, $\Cflip$, $\Clevel$, $\Cdist$, and $\Cany$, which are inspired by the classical setting, ranging from unitary costs irrespective the amount of change per voter, over the scenario of swap bribery where specific costs for swaps in the preferences can be modelled, to the case where arbitrary cost can be incorporated.
Additionally, these cost schemes are extended by a cost vector ${\bf Q} \in (\mathbb{N})^n$ which allows for modelling an individual cost factor for each voter. Also, different bribery actions are considered. Mattei~\textit{et al.} distinguish the cases that the briber can affect changes in so-called independent variables only (IV), meaning that he can only make changes concerning issues for which the preferences are independent of the outcome for other issues, in dependent variables only (DV), or in all kinds of variables (IV+DV).

In most of the cases they considered, Mattei~\textit{et al.} obtained that the bribery problem is easy, i.e., solvable in polynomial time. 
An overview of their results is given in Table~\ref{table:results-mattei}. Three cases with different variants remained unanswered: 
The complexity for the voting systems $OP$, $OV$, $OK^*$ with the cost scheme $\Clevel$ for bribery actions $DV$ and $IV+DV$, the complexity for the voting systems $OP$, $OV$, $OK^*$ with the cost scheme $\Cany$ for all bribery actions, and the complexity for $SM$ with the cost scheme $\Cdist$ for all bribery actions.
The first two cases were believed to be \NP-complete. We show that these two are solvable in polynomial time as well. 

Moreover, we extend Mattei~\textit{et al.}'s results  for the case that voters are weighted. So far, the weighted version of the bribery problem in CP-nets was considered for the voting system $SM$ only. Mattei~\textit{et al.} obtained \NP-completeness for all cost schemes and all bribery actions in this case. Additionally, for the cost scheme $\Cflip$, they showed that the problem is solvable in polynomial time if no individual voter costs are allowed, see Table~\ref{table:results-mattei}.
We complement their results by proving \NP-completeness for most variants of the weighted version for the other voting rules as well, and further illuminate the influence of individual voter costs for the complexity of the weighted versions. We find that considering individual extra costs for voters only makes a difference in terms of complexity for the voting system weighted $SM$ combined with the cost schemes~$\Cflip$ and $\Clevel$. \\
Additionally, we investigate the case of negative bribery, a variant introduced by Faliszewski \textit{et al.}~\cite{faliszewski2006complexity}, where it is not allowed to pay voters to vote for the preferred candidate, both for the unweighted and weighted versions. We obtain that the unweighted negative bribery problem is likewise solvable in polynomial time for almost all variants considered so far, except for $SM$ combined with $\Cequal$ where it is \NP-complete, and for $SM$ combined with $\Cdist$, which still is unsolved, like its correspondent in the positive version. The weighted negative version is \NP-complete for all variants considered so far.
Our results are summarized in Table~\ref{table:results} in Section~\ref{sec:results}.\\

\noindent
This work is organized as follows. 
In Section~\ref{sec:prel}, we give the basic notions used from CP-nets, voting, and bribery. We introduce the setting of Mattei~\textit{et al.}, the definitions of the bribery variants to be investigated, and the \NP-complete problems that we use in our reductions. 
Our results are contained in Section~\ref{sec:complexity}. We start with proving Theorem~\ref{thm:sums} for finding the `cheapest subsets' of a finite set which helps in solving some of the open problems from Mattei~\textit{et al.}~\cite{mattei2013bribery} in the unweighted variants of the bribery problem. The subsequent subsections deal with the complexity of the weighted case, where reductions from the \NP-complete {\sc Partition} problem are given, and investigations for the unweighted and weighted cases of negative bribery. An overview of our results and open problems is provided in Section~\ref{sec:results}, together with a discussion and directions for future research.

\begin{table}[htb]
\caption{Complexity results obtained by Mattei~\textit{et al.}~\cite{mattei2013bribery} for the bribery problem in CP-nets. \Pclass{} stands for solvability in polynomial time, \NP-c for \NP-completeness. The given results hold for the bribery actions $IV$, $DV$, and $IV+DV$. Questions that were unsolved by Mattei {\it et al.} are marked with `?'.
The case of weighted $SM$ with $\Cflip$ labeled with two complexity classes means that the problem can be solved in polynomial time if no individual voter costs are allowed, and that it is \NP-complete if individual costs are taken into account. \label{table:results-mattei}}

{\footnotesize
\centering
\begin{tabular}{lccccc}
\toprule 
 & $\Cequal$ & $\Cflip$ & $\Clevel$ & $\Cany$ & $\Cdist$  \\
\midrule
 SM & \NP-c & \Pclass & \Pclass & \Pclass & ?  \\
  OP & \Pclass & \Pclass & \Pclass{} for $IV$/  ? for $DV$, $IV+DV$ & ?  & \Pclass \\

  OV & \Pclass & \Pclass & \Pclass{} for $IV$/  ? for $DV$, $IV+DV$  & ?  & \Pclass \\
  OK* & \Pclass & \Pclass & \Pclass{} for $IV$/   ? for $DV$, $IV+DV$  & ? & \Pclass \\
\midrule
  weighted~SM & \NP-c & \Pclass{}, \NP-c & \NP-c & \NP-c & \NP-c \\
\bottomrule

\end{tabular}}
\end{table}

\section{Preliminaries}\label{sec:prel}

We mostly use notation and definitions as introduced by Mattei \textit{et al.}~\cite{mattei2013bribery}. 

\subsection{CP-nets} 

In our setting, we are given a set of $m$ issues~$M=\{X_1, \dots, X_m\}$ each of which has a binary domain $D(X_j) = \{x_j, \overline{x}_j\}$, where $j \in \{1, \dots, m\}$. A complete assignment to all issues is called an {\it outcome} or a {\it candidate}, the set of candidates $D(X_1) \times \dots \times D(X_m)$ hence consists of $2^m$ elements. Each of the $n$~voters has (possibly) conditional preferences over the values assigned to the issues: the value assigned to an issue by a voter might affect the values of other issues. In this case, the latter issues are called {\it dependent} issues; an issue is called {\it independent} if its value does not depend on the value of another issue. The dependencies can be modeled by a directed \emph{dependency graph} having vertex set~$M$ and a directed edge going from~$X_i$ to~$X_j$ if and only if the assignment of the value of~$X_j$ depends on the assignment of the value of~$X_i$. 
A {\it CP-net} over $M$ consists of a dependency graph together with a {\it conditional preference table} for each issue $X_i$ (or vertex of the dependency graph, respectively) where the voter
specifies a strict total order over the values of $X_i$ for each complete assignment of the issues on which it depends on. Each of these specifications is referred to as a {\it cp-statement}. 
For example, if for an independent issue $A$, the assignment~$A=a$ is unconditionally preferred to $A=\overline{a}$, the cp-statement is written as $a>\overline{a}$; if the assignment~$A=\overline{a}$ is preferred to $A=a$, we write $\overline{a}>a$.  For a set of issues~$\{A,B\}$ with domains $D(A)=\{a,\overline{a}\}$ and $D(B) = \{b,\overline{b}\}$ in which~$B$ is dependent on~$A$, if $a$ is unconditionally preferred over $\overline{a}$, and $\overline{b}$ is preferred over $b$ in case the value of $a$ is assigned to $A$, and $b$ is preferred over $\overline{b}$ in case the value of $\overline{a}$ is assigned to $A$, we have one cp-statement for $A$, namely $a>\overline{a}$, and two cp-statements for $B$, which we write as  
\begin{align*}
a: \overline{b}>b, \\
\overline{a}: b >\overline{b}.
\end{align*}

A CP-net is called \emph{acyclic} if the corresponding dependency graph is acyclic; it is called \emph{compact} if the number of issues each issue depends on is bounded by a constant. We remark that the original definition of CP-nets by Boutilier \textit{et al.}~\cite{boutilier2004cp} does not require acyclicity, but this assumption is natural \cite{lang2007vote} and also employed by Mattei \textit{et al.}~\cite{mattei2013bribery}. In acyclic CP-nets, there is only one most preferred candidate which can then be found in linear time by going through the CP-net in topological order and assigning the most preferred value to each issue according to the cp-statements~\cite{boutilier2004cp}.\\
Throughout this work, we assume that the voters' preferences on the set of issues are given by acyclic and compact CP-nets. The collection of CP-nets of all voters is called a \emph{profile}.

\begin{example}\label{ex:friends}
A group of friends wants to go on vacation. They can choose between going to Austria or Italy for hiking or alpine skiing in the summer or the winter holidays. The three issues to decide on are hence \emph{Where}, \emph{When}, and \emph{What}. So the corresponding domains are $D(\text{Where})=\{\text{Austria},\text{Italy}\}$, $D(\text{When})=\{\text{summer},\text{winter}\}$, and $D(\text{What})=\{\text{skiing},\text{hiking}\}$. A possible candidate would be \emph{skiing} in \emph{summer} in \emph{Italy}. Figure~\ref{fig:example} shows two possible conditional preferences modelled as CP-nets, each of them consisting of a dependency graph and a conditional preference table.
\end{example} 

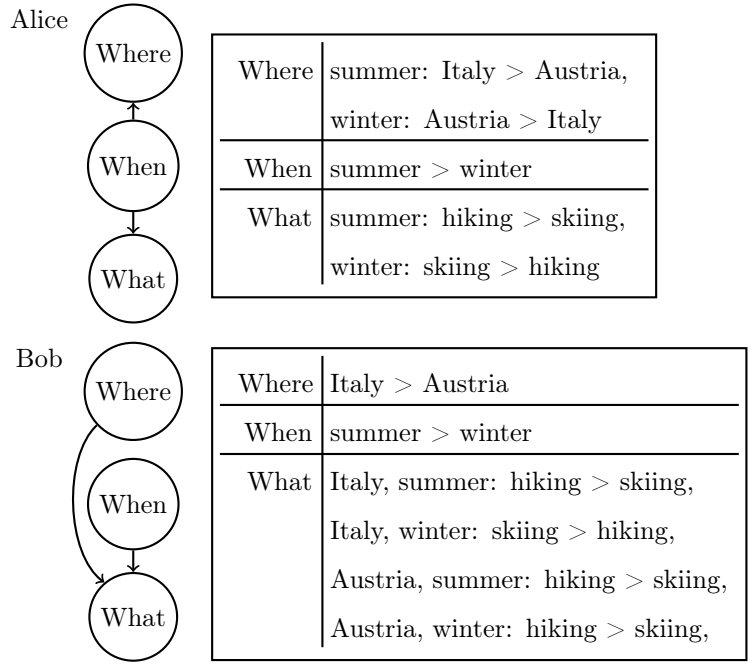
\begin{figure}[h!]
\centering
\begin{tikzpicture} [->,thick,align=center,xscale=2.5,yscale=1.5]  
  \path[every node/.style={draw,thick,circle, minimum width=0.6cm, inner sep=3pt}]
  	   (-0.5,5.3) node[draw=none] {Alice}
  	   (0,5)   node (a1)  {Where}
       (0,4)   node (a2) {When}
       (0,3)   node (a3) {What}
       (-0.5,2.3) node[draw=none] {Bob}
  	   (0,2)   node (c1)  {Where}
       (0,1)   node (c2) {When}
       (0,0)   node (c3) {What};
  \path[every node/.style={font=\sffamily\small}]
  	(a2)		edge node {} (a1)
			edge node {} (a3)
    	(c1)		edge[out=230,in=130,looseness = 0.6,->] node {} (c3)
    	(c2)		edge node {} (c3);
    	
  \matrix(alice)[draw,matrix of nodes,nodes in empty cells,
        nodes={align=left,text width=4.1cm,text height=0.35cm,},
        column 1/.style={nodes={text width=1.1cm,align=right}}
    ] at (1.6,4){
        \phantom{j}Where 	& summer: Italy > Austria, \\
         		& winter: Austria > Italy \\ 
        \phantom{j}When 	& summer > winter\phantom{j}\\
        What 	& summer: hiking > skiing,\\
        			& winter: skiing > hiking\\
    };
    \draw[-](alice-3-1.north west)--(alice-3-2.north east);
    \draw[-](alice-4-1.north west)--(alice-4-2.north east);
    \draw[-](alice-1-1.north east)--(alice-5-1.south east);
    
  \matrix(charlie)[draw,matrix of nodes,nodes in empty cells,
        nodes={align=left,text width=5.3cm,text height=0.35cm,},
        column 1/.style={nodes={text width=1.1cm,align=right}}
    ] at (1.84,1){
        \phantom{j}Where &  Italy > Austria \\ 
        \phantom{j}When 	& summer > winter\phantom{j}\\
        What 	& Italy, summer: hiking > skiing,\\
        			& Italy, winter: skiing > hiking,\\
        			& Austria, summer: hiking > skiing,\\
        			& Austria, winter:  hiking > skiing,\\
    };
    \draw[-](charlie-3-1.north west)--(charlie-3-2.north east);
    \draw[-](charlie-2-1.north west)--(charlie-2-2.north east);
    \draw[-](charlie-1-1.north east)--(charlie-6-1.south east);   
   
\end{tikzpicture}
\caption{Two CP-nets over the issues \emph{Where}, \emph{When}, and \emph{What}. The possible values are Austria or Italy in the summer or the winter holidays for hiking or alpine skiing. Alice thinks that a summer vacation always is the better choice, but if it has to be winter she prefers alpine skiing in Austria.
Bob thinks that alpine skiing is not an appropriate summer activity and that only Italy has good ski regions.}
\label{fig:example}
\end{figure}

CP-nets only define a partial order over the candidates, i.e., some candidates are incomparable. There are several ways to create strict total orders over the candidates (\cite{brandt2013comsoc,conitzer2012approximating}). We use the linearization method that is also used by Mattei~\textit{et al.}: Each voter provides a fixed strict total order $X_{i_1} > X_{i_2} > \dots > X_{i_m}$ over the issues $X_1, \dots, X_m$ such that each issue is independent from all issues following it in this order; this is possible because the CP-net is acyclic. Then we associate a binary vector of length~$m$ to each candidate, for which the entry at position~$j$ corresponds to issue $X_{i_j}$ in the voter's fixed ordering. This entry is set to~$0$ if the preferred value of its binary domain is assigned to it, and $1$ otherwise. Hence, the most preferred candidate is associated to the vector~$(0,\dots,0)$, and the least preferred one to the vector~$(1, \dots, 1)$.
Given a candidate, the next best candidate can be found efficiently by increasing the binary number represented by its vector by one.\\

In the general setting, every voter may have an individual fixed total order over the issues. The voting system $SM$ (defined below) relies on a sequential approach to voting and therefore requires existence of a strict total order $\mathcal{O}$ over the issues such that for each CP-net, each issue is independent from all issues following it in~$\mathcal{O}$. A profile that fulfills this property is called an $\mathcal{O}$-\emph{legal} profile by Lang~\cite{lang2007vote}, see also the work of Lang and Xia~\cite{lang2009sequential}. Mattei~\textit{et al.}~\cite{mattei2013bribery} use this notion as well as the notion of a \emph{constant linearization scheme across all agents} for such an order~$\mathcal{O}$. In Example~\ref{ex:friends}, the profile consisting of Alice's and Bob's CP-nets is $\mathcal{O}$-legal for the order~$\mathcal{O}$: {\it When} $>$ {\it Where} $>$ {\it What}.

\subsection{Voting} 
Given a profile of CP-nets over a set of issues, one can determine the winning outcome(s) by a voting rule which maps a profile to a set of candidates. 
In the \emph{unique winner} model, the rule determines a single winning candidate, whereas in the \emph{non-unique} or \emph{co-winner} model, the output of the rule is a whole set of candidates which are all considered as winners. In this case, the notion of a voting correspondence is also used for the notion of a voting rule. We consider the same voting rules for CP-nets as Mattei~\textit{et al.}~\cite{mattei2013bribery}.
\begin{itemize}
\item Sequential majority ($SM$): Given a total order $\mathcal{O}$ for which the profile is $\mathcal{O}$-legal, we follow this order issue by issue, and execute a majority vote for each issue. The voters fix the winning value of the corresponding issue in their CP-net and then go on to the next issue. The (unique) winning candidate is the combination of the winners of the individual steps taken. 
\item One-Step-$k$-Approval ($OK$): The $k$ most preferred candidates of each voter obtain $1$ point each, the remaining candidates obtain $0$ points. The (co-)winners of the election are the candidates with the maximum number of points. In particular, we consider the two following special cases: 
 \begin{itemize}
 \item One-Step-Plurality ($OP$), where $k=1$, i.e., only the most preferred candidate of each voter has to be considered, and 
 \item One-Step-Veto ($OV$), where $k=2^m-1$, i.e., only the least preferred candidate of each voter has to be provided. 
  \end{itemize}
\end{itemize}
For $OK$,  
$\mathcal{O}$-legality of the profile is not necessarily required. Mattei \textit{et al.} also consider the special case of~$OK$, denoted by $OK^*$, where $k$ is a power of two. 
For more details and several examples for the use of these rules, we refer to Mattei~\textit{et al.}~\cite{mattei2013bribery}.

 \subsection{Bribery}\label{subsection:Bribery}
 We consider the problem that an external agent, the  \emph{briber}, who knows the CP-nets of all voters, asks them to execute changes in their cp-statements. Mattei \textit{et al.}~\cite{mattei2013bribery} define this in a way that the briber can ask the voters to flip the value of one or more issues in their CP-nets, which might imply several further changes, according to the dependencies. They distinguish the case that the briber can ask for a change in the cp-statements of the independent issues only (IV), the dependent issues only (DV), or in any cp-statement (IV+DV).  
\\
Moreover, they introduce five cost schemes:
 \begin{itemize}
 \item $\Cequal$, where any amount of change in the CP-net has the same unit cost, 
 \item $\Cflip$, where the cost of changing a CP-net is the total number of individual cp-statements that must be flipped to obtain the desired change,
 \item $\Clevel$, where the cost to flip a cp-statement is linked to the depth of the associated issue in the dependency graph. \\ The cost of a bribery is computed as $\sum_{X_j \in M} \text{flip}(X_j)\cdot(k+1-\text{level}(X_j))$, where $k$ is the number of levels in the CP-net, $\text{level}(X_j)$ corresponds to the depth of issue $X_j$ in the dependency graph, and $\text{flip}(X_j)$ is the number of flips performed in cp-statements associated with $X_j$. (Note that for the voting rules we consider, we have $\text{flip}(X_j)\in \{0,1\}$ for all $j$, see the remark below.) More precisely, $$\text{level}(X_j)= \begin{cases} 1 & \text{if }  X_j \text{ is an independent issue;} \\ i+1 & \text{if level$(X_i) \leq i$ holds for all parents $X_i$ of $X_j$ and} \\ &  \text{there is at least one parent $X_l$ with level$(X_l) = i$}.\end{cases}$$
  \item $\Cany$, where the cost is the sum of the flips, each weighted by a specific cost, and
 \item $\Cdist$, where we require a fixed order of the issues for each voter (but not necessarily the same for each of them), which induces a strict total order over all candidates. The cost to bribe a voter to make candidate $c$ his top candidate is the number of candidates who are better ranked than $c$ in this order.
\end{itemize}

\noindent

\begin{remark}
 We remark that for the considered voting rules and for each reasonable bribery, $\text{flip}(X_j)$ in the definition of $\Clevel$ is equal to $0$ or to $1$ for each issue $X_j$. We call a bribery reasonable if there is no lower priced bribery that has the same benefit for the briber.
Clearly, $\text{flip}(X) \in \{1,0\}$ holds for an independent issue~$X$ because there is only one cp-statement associated with an independent issue. 
All dependent issues have more than one cp-statement, but for the four voting rules we consider, it is never reasonable to bribe for changes in more than just one cp-statement per issue, as those additional changes may generate additional costs but are of no help.
\begin{itemize}
\item For $SM$, when the majority vote is executed for an issue $X$, this means that for each voter all the issues that $X$ depends on already have a fixed value. Therefore only one cp-statement per issue is relevant. 
\item For $OP$, the briber only has to change the most preferred candidate of a voter in case he decides to bribe him. This is achieved by changing the value of some of the issues in the corresponding CP-net, and this can be done by flipping a single cp-statement in the issues that are concerned.
\item For $OV$, this is almost the same. The briber only has to change the least preferred candidate, which means that the value of some issues has to be flipped, which can be done by flipping a single cp-statement per issue. 
\item For $OK$, we have a set of approved candidates and a set of disapproved candidates. There is at most one cp-statement per issue which implies a change of these two sets if it is flipped. A flip of the other cp-statements only changes the order of the candidates within these sets. 
\end{itemize}
\end{remark}

\noindent
Additionally, these cost schemes are extended by a cost vector ${\bf Q} \in (\mathbb{N})^n$ which allows for modelling an individual cost factor for each voter. The factor for voter~$v_i$ is denoted by $Q[i]$ and is multiplied with the costs calculated by a certain cost scheme to obtain the amount that the briber has to pay to $v_i$. We remark that all our tractability results, except the one of Theorem~\ref{thm:weighted-sm-clevel}, hold for arbitrary cost vectors, while all of our hardness results still hold with $ Q[i]=1$ for all $1 \leq i \leq n$.\\

\noindent The $(D,A,C)$-bribery problem is then defined in the following way: 
\begin{quote}
\textsc{$(D,A,C)$-Bribery}\newline
\textbf{Given:}  A profile of $n$ CP-nets over $m$ common binary issues,  a winner determination voting rule $D \in \{SM, OP, OV, OK\}$, a bribery action $A\in \{IV, DV, IV+DV\}$, a budget~$\beta$, a cost scheme $C\in \{\Cequal,\Cflip,\Clevel,\Cany,\Cdist\}$, a cost vector ${\bf Q}\in (\mathbb{N})^n$, and a preferred candidate~$p$. For the voting rule SM, we also require $\mathcal{O}$-legality for a given total order $\mathcal{O}$ over the issues. \newline
\textbf{Question:} Is it possible to bribe the voters to make changes in their cp-statements in such a way that $p$ becomes a (co-)winner of the bribed election, without exceeding~$\beta$?
\end{quote}

In this work, we focus on the non-unique winner model to keep the proofs as simple as possible.
All our results still hold for the unique winner model. Our proofs can easily be adapted for this model, e.g. by adding a tie-breaking voter who initially votes for $p$, cf.\ Faliszewski~\textit{et al.}~\cite{faliszewski2009hard}, or some similar adjustment. Note that in general, tie-breaking is not an easy task and an interesting question of its own (\cite{faliszewski2008copelandties, DBLP:journals/corr/abs-1304-6174, Obraztsova:2011:CVM:2283396.2283450, obraztsova2011ties}).\\

We also consider \textsc{Weighted-$(D,A,C)$-Bribery}, which is defined in the same way, but with weighted voters, which is a typical variant for bribery problems (see the overview of Faliszewski \textit{et al.}~\cite{faliszewski2009hard}).
Moreover, we investigate the computational complexity of weighted and unweighted \textsc{$(D,A,C)$-Negative-Bribery}, which are defined as their non-negative versions, but where the briber is not allowed to bribe voters to vote for his preferred candidate, see also Faliszewski \textit{et al.}~\cite{faliszewski2009hard}, who introduced this variant in the original setting of the bribery problem. \\

\noindent For the polynomial time algorithm we give in Theorem~\ref{thm:weighted-sm-clevel}, we consider the bribery problem as the optimization version of the {\sc Knapsack} problem~\cite{gary1979computers}, which is defined as follows.
\begin{quote}
{\sc Knapsack}\newline
\textbf{Given:} A finite set $U$, for each $u\in U$ a size $s(u)\in \mathbb{Z}^+$ and a value $v(u)\in \mathbb{Z}^+$, and a positive integer $S$. \newline
\textbf{Task:} Find a subset $U' \subseteq U$ such that $\sum_{u \in U'} s(u) \leq S$ and $\sum_{u\in U'} v(u)$ is maximal.
\end{quote}

The {\sc Knapsack} problem is known to be \NP-complete and can be solved in time $\bigO{n\cdot S}$ or $\bigO{n \cdot T}$, where $T:= \sum_{u \in U} v(u)$, with a dynamic programming algorithm, see the work of Dantzig~\cite{dantzig1957discrete}. This running time is pseudo-polynomial since it depends on the representation of the integers $S$ or $T$. We will see that in our case, however, $T$ is bounded by $n\cdot m$. Neither $n$ nor $m$ can be exponential in the input size, since each of the $n$ voters is given by his CP-net and each such CP-net consists of at least $m$ cp-statements.
This will lead to solvability in polynomial time with respect to the input size.\\

\noindent To show \NP-completeness for variants of the bribery problem, we use reductions from the following problems.

\begin{quote}
\SpecialPartition\newline
\textbf{Given:} 
A set $\mathcal{A}$ of (not necessarily different) integers with \newline \hspace*{1.34cm}$\sum_{a\in\mathcal{A}}a=2\psi$.\newline
\textbf{Question:} Is there a subset $\mathcal{A'}\subset \mathcal{A}$ such that $\sum_{a\in \mathcal{A}'} a = \psi$?
\end{quote}

\noindent  
This problem is one of Karp's first 21 \NP-complete problems~\cite{karp1972reducibility}. \\
The following problem is used for the reductions in the negative case. 

\begin{quote}
\textsc{Negative Optimal Lobbying}\newline
\textbf{Given:} An $n \times m$ $0/1$ matrix $\mathcal{E}$, a positive integer $k$, and a $0/1$ vector $x$ of length $m$. (Each row of $\mathcal{E}$ represents an agent. Each column represents a referendum in the election or a certain issue to be voted on by the legislative body. The $0/1$ values in a given row represent the natural inclination of the agent with respect to the referendum questions put to a vote in the election. The vector $x$ represents the outcomes preferred by The Lobby.)\newline
\textbf{Parameter:} $k$ (representing the number of agents to be influenced)\newline
\textbf{Question:} Is there a choice of $k$ rows of the matrix, such that these rows can
be edited, without setting them equal to $x$, so that in each column of the resulting matrix, a majority vote in that column yields the outcome targeted by The Lobby ($=x$)?
\end{quote}

\noindent This problem is the negative version of the \NP-complete \textsc{Optimal Lobbying} problem~\cite{christian2007complexity}. As the following theorem states, it is \NP-complete as well.
In fact, \textsc{Optimal Lobbying} is even \Wzwei-hard  with respect to the number~$k$ of agents to be influenced. This number (which is part of the input) is called a \emph{parameter} in the context of parameterized complexity theory. The class \Wzwei~belongs to the so-called $W$-hierarchy within this theory. We refer to Downey and Fellows~\cite{DF99} for these definitions, or the books of Flum and Grohe~\cite{FG06} or Niedermeier~\cite{niedermeier2006invitation}. To prove membership and hardness with respect to the classes of the $W$-hierarchy, so-called {\it parameterized reductions} are used which have to meet certain requirements on the parameters that are considered. The polynomial-time reduction we give in the proof of the following theorem preserves the parameter~$k$ and therefore is a simple case of a parameterized reduction, implying \Wzwei-hardness with respect to~$k$ for \textsc{Negative Optimal Lobbying}. Informally speaking, this means that the problem is not expected to be solved efficiently even for small values of~$k$.

\begin{theorem}\label{thm:optimal-lobbying-negative}
\textsc{Negative Optimal Lobbying} is \NP-complete and \Wzwei-hard with respect to the number of agents to be influenced.
\end{theorem}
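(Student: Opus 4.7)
The plan is to reduce \textsc{Optimal Lobbying}~\cite{christian2007complexity}, which is \NP-complete and \Wzwei-hard with respect to the number $k$ of influenced agents, to \textsc{Negative Optimal Lobbying} via a polynomial-time, parameter-preserving reduction. Membership in \NP{} is routine: a non-deterministic algorithm guesses the $k$ rows to edit together with their new contents, and then verifies in polynomial time both that no edited row equals the target vector~$x$ and that every column realises the required majority.

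For the reduction, I would take an instance $(\mathcal{E}, k, x)$ of \textsc{Optimal Lobbying} with an $n \times m$ matrix $\mathcal{E}$ (assuming $n \geq 3$, which is WLOG for hardness since the problem is trivial for $n \in \{1,2\}$) and append $k$ fresh ``buffer'' columns that are identically~$0$. Formally, I construct $\mathcal{E}' \in \{0,1\}^{n \times (m+k)}$ with $\mathcal{E}'_{i,j} = \mathcal{E}_{i,j}$ for $j \leq m$ and $\mathcal{E}'_{i,j} = 0$ otherwise, and set the extended target vector $x' = (x_1, \dots, x_m, 0, \dots, 0)$. The parameter~$k$ is carried over unchanged, so the construction is a parameterised reduction in the sense required for transferring \Wzwei-hardness.

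For the equivalence, the backward direction is immediate: restricting any \textsc{Negative Optimal Lobbying} solution on $(\mathcal{E}', k, x')$ to the first~$m$ columns yields an \textsc{Optimal Lobbying} solution on $(\mathcal{E}, k, x)$, because majorities can only be inherited column by column. For the forward direction, given an \textsc{Optimal Lobbying} solution editing rows $r_1, \dots, r_k$, I inspect each $r_i$ after editing; whenever it would coincide with~$x'$, I additionally flip its entry in the dedicated buffer column~$m+i$ from~$0$ to~$1$. This breaks the equality $r_i = x'$ without touching any other bribed row or disturbing the first $m$ columns, and each buffer column loses at most one vote for~$0$, so it retains at least $n - 1 > n/2$ votes for~$0$ and the new majority condition is preserved.

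The only delicate point I anticipate is the majority bookkeeping in the buffer columns, which is why I reserve one dedicated buffer column per potentially-bribed row: this guarantees that no buffer column ever accumulates more than a single flip, keeping the estimate $n - 1 > n/2$ clean and avoiding any case distinction on the parity of~$n$. Since the construction runs in time polynomial in $n+m$ and preserves~$k$ verbatim, it simultaneously witnesses \NP-hardness and \Wzwei-hardness of \textsc{Negative Optimal Lobbying} with respect to the number of agents to be influenced, completing the proof together with the \NP{} membership observed at the start.
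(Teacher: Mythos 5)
Your proof is correct and follows the same overall strategy as the paper's---a polynomial-time, parameter-preserving reduction from \textsc{Optimal Lobbying} obtained by appending extra columns whose target value is $0$---but the gadget is genuinely different. You append $k$ all-zero buffer columns and repair the solution on demand: whenever an edited row would coincide with the extended target vector, you flip its private buffer entry, which is why you must assume $n \geq 3$ so that a buffer column containing a single $1$ still has majority $0$ (an assumption you correctly justify as harmless for both \NP- and \Wzwei-hardness). The paper instead appends $n$ columns forming the $n \times n$ identity matrix, so every row already carries a $1$ where the extended target vector has a $0$; the editing inherited from the \textsc{Optimal Lobbying} solution can then be used verbatim, no repair step is needed, and both directions of the equivalence reduce to the observation that the new columns never need to be touched. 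Your version is more economical in the number of added columns ($k$ versus $n$) and makes the \NP-membership argument explicit, at the price of a case distinction on small $n$ and a slightly more involved forward direction; the paper's version trades a larger padding for a cleaner, repair-free equivalence. Both constructions preserve the parameter $k$ verbatim and therefore transfer \Wzwei-hardness as required.
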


\begin{proof}
We give a reduction from \textsc{Optimal Lobbying}. Starting with an
instance $I_{OL}$ of \textsc{Optimal Lobbying} with an $n \times m$
$0/1$ matrix $\mathcal{E}$, a positive integer $k$, and a $0/1$ vector
$x$ of length $m$, we construct the instance~$I_{NOL}$ of
\textsc{Negative Optimal Lobbying} by extending the vector $x$ of the
preferred outcome of The Lobby by $n$ positions, all of them set to $0$.
The matrix $\mathcal{E}$ is extended by $n$ additional columns as well. We
use the $n\times n$ identity matrix for this extension. Therefore
exactly one entry is $1$ in each of the additional columns, and each
row has one $1$-entry in exactly one new column. All
remaining entries in this extension are set to $0$. The parameter~$k$ is left unchanged.

We make two observations: First, the new matrix still has $n$ rows, so
the majority in the new columns is always $0$. And second, there is
exactly one $1$ in each row in the new columns. So there is no need for
the Lobby to edit anything in the new columns and no row has to be
changed to $x$.
It is not hard to verify that $I_{OL}$ is a yes-instance if and only if
$I_{NOL}$ is a yes-instance.
\end{proof}

\section{Computational Complexity of Bribery Schemes in CP-nets}\label{sec:complexity}

\subsection{Results for the unweighted case}

In this section, we address several cases left open by Mattei \textit{et al.}~\cite{mattei2013bribery}. 
They showed that the unweighted bribery problem is 
solvable in polynomial time for the one-step voting systems $OP$, $OV$, $OK^*$ for several cost schemes. For the cost scheme $\Cany$ and all bribery actions, and for the cost scheme $\Clevel$ with bribery actions $DV$ and $IV+DV$, the computational complexity was still open. In this section, we show that all of these problems are polynomially tractable as well, thus completing the complexity table for $OP$, $OV$, $OK^*$ in the unweighted case. An overview of all results (both from Mattei~\textit{et al.}~\cite{mattei2013bribery} and ours) can be found in Table~\ref{table:results} at the end of this work.\\

\noindent
One task in the proofs of Theorems~\ref{thm:OP_Cany} and~\ref{thm:OV_Cany} which address the open problems consists in finding the~$n$ cheapest candidates in the exponentially large set of candidates. We give a general formulation of this problem in terms of finding the ``smallest subsets'' of a set as follows. By $\mathcal{P}(\mathcal{A})$, we denote the power set of a finite set~$\mathcal{A}$.

\begin{quote}
\summenproblem{}\newline
\textbf{Given:} A finite set $\mathcal{A}=\{a_1, a_2, \dots, a_m\}$, size $s(a_i)\in \mathbb{Z}^+$ for $1 \leq i \leq m$, and a unary coded integer $K \in \mathbb{N},\,  K \leq 2^m =\lvert \mathcal{P}(\mathcal{A})\rvert$.\newline
\textbf{Wanted:} The ``$K$ smallest subsets'' of $\mathcal{A}$, i.e., the first $K$ elements of the power set $\mathcal{P}(\mathcal{A})$, when its elements are sorted ascendingly (ties broken arbitrarily) by the sums of the sizes of their elements.
\end{quote}

In the following, by the {\it size} of a subset, we refer to the sum of the sizes of its elements. The empty set has size zero. The \summenproblem{} hence asks for the $K$ elements of $\mathcal{P}(\mathcal{A})$ with the smallest size.
\begin{example} 
Let $\mathcal{A} = \{a_1, \dots, a_6\}$ with $s(a_1)= s(a_2)= 1, \, s(a_3) = 2, \, s(a_4)= 3, \, s(a_5)= 4, \, s(a_6)= 7$. For a better readability, we use the notation $\mathcal{A} = \{1,\hat{1},2,3,4,7\}$ in this example, where we mark one $1$ with a hat to be able to distinguish between $a_1$ and $a_2$. Then the seven smallest subsets of~$\mathcal{A}$ are the sets $ \{1\}$, $\{\hat{1}\}$, $\{1,\hat{1}\}$, $\{2\}$, $\{1, 2\}$, $\{\hat{1}, 2\}$, $\{3\}$. For $n=8$, there are several possible solutions, each of them consisting of the seven sets $\{1\}$, $\{\hat{1}\}$, $\{1,\hat{1}\}$, $\{2\}$, $\{1, 2\}$, $\{\hat{1}, 2\}$, $\{3\}$ and one of the sets $\{1, \hat{1}, 2\}$, $\{1, 3\}$, $\{\hat{1},3 \}$, $\{4\}$, respectively.
\end{example}

This problem is the enumeration variant (see the recent survey of Eppstein~\cite{eppstein2014kbest} on enumeration variants) of \textsc{K$^\text{\rm th}$ Largest Subset}~\cite{gary1979computers} which was shown to be $\#{}P$-complete by Lawler in 1972~\cite{lawler1972procedure}. The procedure given by Lawler is designed to be applicable to many different problems and therefore is not optimal in terms of the running time. To the best of our knowledge, it was not improved further. Since the applicability of Lawlers procedure to \summenproblem{} can be easily missed, we give a specialized algorithm with the same running time if $m$ equals $K$ and a better running time in every other case\footnote{Lawler's procedure has a running time of $\bigO{Kmc(m)}$, with $c(m)$ being the time required to compute an optimal solution with $m$ variables. In our case $c(m)$ is a constant. Taking into account the time required to find the minimum in the list~\cite[Step 1]{lawler1972procedure} and the creation of the instances~\cite[Step 3]{lawler1972procedure}, this leads to a running time of $\bigO{K(K+m^2)}$ for \summenproblem{}.}.

\begin{theorem}\label{thm:sums}
The problem \summenproblem{} is solvable in $\bigO{\min \{K,m\}\cdot(Km+K)}$ time.
\end{theorem}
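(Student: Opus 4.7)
The plan is to process the elements of $\mathcal{A}$ one by one in non-decreasing order of size, maintaining at each stage a sorted list of the (at most) $K$ smallest subsets of the elements processed so far. First I would sort $\mathcal{A}$ so that $s(a_1)\le s(a_2)\le\dots\le s(a_m)$. A preliminary observation allows me to ignore all but the first $k:=\min\{K,m\}$ elements: if a subset $T$ contains some $a_j$ with $j>K$, then the $K+1$ distinct subsets $\emptyset,\{a_1\},\dots,\{a_K\}$ each have size at most $s(a_K)\le s(a_j)\le s(T)$ and none equals $T$, so under an appropriate tie-break $T$ has rank at least $K+2$. Hence the $K$ smallest subsets (for some valid tie-break) all lie in $\mathcal{P}(\{a_1,\dots,a_k\})$.

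For $j=0,1,\dots,k$, let $L_j$ be a list, sorted by size, of the $\min\{K,2^j\}$ smallest subsets of $\{a_1,\dots,a_j\}$, each entry consisting of the subset (stored as an $m$-bit indicator) together with its size. I initialize $L_0=[(\emptyset,0)]$. Given $L_{j-1}$, I observe that the subsets of $\{a_1,\dots,a_j\}$ partition into those not containing $a_j$, whose sizes are exactly those in $L_{j-1}$, and those containing $a_j$, whose sizes are those in $L_{j-1}$ shifted by $s(a_j)$. I build $L'_{j-1}$ by copying $L_{j-1}$, adding $a_j$ to every subset, and adding $s(a_j)$ to every size, then merge $L_{j-1}$ and $L'_{j-1}$ by size and truncate to the first $K$ entries, yielding $L_j$. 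Correctness of the truncation follows by induction: each half of the merged list can contribute at most $K$ entries to the overall top $K$, so discarding everything beyond position $K$ in either half can never eliminate a true top-$K$ subset.

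Each iteration comprises two linear passes over lists of length at most $K$ whose entries carry $m$-bit subset encodings: constructing $L'_{j-1}$ (copy the encodings, flip one bit, add $s(a_j)$ to each size) and performing the merge-and-truncate each cost $\bigO{Km+K}$, where the $Km$ term accounts for moving subset data and the $K$ term for size arithmetic and comparisons. Iterating $k=\min\{K,m\}$ times and absorbing the $\bigO{m\log m}$ cost of the initial sort yields the bound $\bigO{\min\{K,m\}\cdot(Km+K)}$. I expect the most delicate point to be the structural reduction to the first $k$ elements, which is what distinguishes this specialized procedure from Lawler's generic algorithm and keeps the loop length at $k$ rather than $m$; without it the bound would degrade to $\bigO{m\cdot(Km+K)}$ and fail to match the theorem in the regime $K<m$.
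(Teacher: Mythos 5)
Your proposal is correct and follows essentially the same route as the paper: sort the elements, iterate over the first $\min\{K,m\}$ of them, and at each step extend the current list of (up to) $K$ smallest subsets by adjoining the new element to each, then merge the two sorted lists and truncate to length $K$, with the same per-iteration cost of $\bigO{Km+K}$. Your preliminary observation justifying the restriction to the first $\min\{K,m\}$ elements is a slightly more explicit version of the argument the paper gives at the end of its correctness discussion, but the algorithm and analysis coincide.
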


\begin{proof}
Since $\mathcal{P(A)}$ contains exponentially many subsets of $\mathcal{A}$, we cannot simply sort them by their sizes if we ask for a polynomial time algorithm. Instead, we proceed in an iterative way as described in Algorithm~\ref{algo}. We assume that the elements of~$\mathcal{A}$ are sorted in ascending order by their sizes, i.e., $s(a_1) \leq s(a_2) \leq \dots \leq s(a_m)$. \\
The idea of the algorithm is the following. Assume we have an array $A$ containing the (up to) $K$ smallest subsets of the set $\mathcal{A}_{i}:=\{a_1, \dots, a_{i}\}$, for some $i \in \{0, 1, \dots, m-1\}$ ($\mathcal{A}_0$ being the empty set), sorted ascendingly by their size. Then we can extend this solution to a solution to \summenproblem{} for the set $\mathcal{A}_{i+1}= \{a_1, \dots, a_{i}, a_{i+1}\}$: We generate an array~$B$ containing the same subsets as $A$ (in the same order). We then adjoin the element $a_{i+1}$ to each of the subsets in $B$. The solution to \summenproblem{} for $\mathcal{A}_{i+1}$ consists of the (up to) $K$ subsets contained in $A$ and $B$ with the smallest size. For finding these we can make use of the sorting of $A$ and $B$, and do a two-way merge~\cite[p. 158]{Knuth1998theart3} until we find the first (up to) $K$ subsets. 

We start this procedure with the solution for $\mathcal{A}_0$, which consists of the empty set only. Then we extend the solution in an iterative way as described above until we obtain the solution for $\mathcal{A}_m=\mathcal{A}$.\\

\noindent\emph{Correctness.} We show that the following statement $\mathcal{I}$ is an invariant for the loop in line~\ref{algo:for} of Algorithm~\ref{algo}:
\begin{quote}
$\mathcal{I}$: Array $A$ contains the (up to) $K$ subsets of $\mathcal{A}_{i}=\{a_1,\dots, a_{i}\}$ with the smallest size in ascending order. 
\end{quote}

\noindent When the algorithm first enters line~\ref{algo:for}, $A$ contains the only subset of $\mathcal{A}_0$, namely $\emptyset$, so $\mathcal{I}$ holds. Now we show that $\mathcal{I}$ still holds after each repetition of the loop.\\
For an iteration $i\geq 1$ of the loop, $A$ stores the (up to) $K$ smallest subsets of the last iteration $i-1$. We create (up to) $K+1$ smallest subsets of $\mathcal{A}_{i}$ which all contain the element $a_{i}$, by subjoining $a_i$ to each of the subsets that we copy from $A$, and by creating the set~$\{a_i\}$ (line~\ref{algo:aclone}). There cannot exist any $a_i$-containing subset~$X$ of $\mathcal{A}_i$ which is smaller than the most expensive one in $B$, because by $\mathcal{I}$, the set $X\setminus \{a_i\}$ has to be contained in $A$. By choosing the (up to) $K$ smallest subsets of $A$ and $B$ to be contained in $A$ for the next iteration, $\mathcal{I}$ holds again at the end of the repetition.\\
After $\min\{K,m\}-1$ iterations of the outer for-loop in line~\ref{algo:for}, $A$ contains the solution to \summenproblem. For $K \geq m$, this is clear because $\mathcal{A}_m = \mathcal{A}$. For $K < m$, the elements $a_{n+1}, \dots, a_m$ cannot be contained in any subset belonging to a solution to \summenproblem, i.e., the solutions to \summenproblem{} for $\mathcal{A}$ and $\mathcal{A}_n$ are identical.\\

\noindent\emph{Running time.} Most of the operations in Algorithm~\ref{algo} can be performed in constant time. 
The for-loop in line~\ref{algo:for2} needs to rewrite at most each entry of $B$. Since $B$ can be represented by a 2-dimensional boolean array with dimension $(K+1)\times m$, this can be done in $\bigO{(K+1)\cdot m}$ time.\\
In line~\ref{algo:if} we compare the sums of the elements of two different subsets. The sum of the elements of each subset can be stored and maintained (line~\ref{algo:aclone}) and can therefore be obtained in constant time for line~\ref{algo:if}. So, the for-loop in line~\ref{algo:for3} can be processed in time $\bigO{K}$.
The copying process in line~\ref{algo:cclone} takes time $\bigO{K\cdot m}$ because of the size of $C$, but one can avoid this by swapping the roles of $A$ and $C$ in each iteration.\\
We hence obtain a running time of $\bigO{\min\{K,m\}\cdot(Km + K)}$.
\end{proof}

\begin{algorithm}[h]
\label{algo}
\SetEndCharOfAlgoLine{}
\caption{Pseudocode for solving \summenproblem{}.}
\KwIn{set $\mathcal{A}=\{a_1,\dots,a_m\}$ with $s(a_i)\in \mathbb{Z}^+$, $K\in \mathbb{N},\,  K \leq 2^m$}
\KwOut{array $A$ containing the $K$ \emph{smallest subsets} of $\mathcal{A}$}
\CommentSty{\% assumption:  $s(a_1) \leq s(a_2) \leq \dots \leq s(a_m)$ }\;
initialize $A$ as an empty array of sets\;
$A[1] := \emptyset$ \CommentSty{\% now $A$ contains only one subset, the empty set}\;
\lnlset{algo:for}{(a)}\For{$i:=1,\dots, \min \{m,K\}$}{
	initialize $B,C$ as empty arrays of sets\;
	\CommentSty{\% create up to $K+1$ new subsets, all containing element $a_i$ }\;
	\CommentSty{\% `$\lvert A\rvert$' denotes the number of subsets (elements) in array $A$}\;
\lnlset{algo:for2}{(b)}	\For{$j := 1, \dots, \lvert A\rvert$}{
		\lnlset{algo:aclone}{(c)}$B[j] := A[j]\cup\{a_i\}$\;
	}
	\CommentSty{\% find up to $K$ elements with smallest size (like a two-way merge)}\; 
	$\textit{indexA}, \textit{indexB} := 1$\;
	\lnlset{algo:for3}{(d)}\For{$j:=1,\dots,\min \{K, \lvert A\rvert+\lvert B\rvert\}$}{
		\CommentSty{\% by `size of $A[k]$' we mean the size of the subset stored in $A[k]$}\;
		\lnlset{algo:if}{(e)}\eIf{$\textit{indexA} \leq \lvert A\rvert$ \textrm{\bf and}  size of $A[\textit{indexA}] < \text{size of }B[\textit{indexB}]$}{
			$C[j] := A[\textit{indexA}]$\;
			$\textit{indexA}\plusplus$\;
		}{
			$C[j] := B[\textit{indexB}]$\;
			$\textit{indexB}\plusplus$\;
		}
	}
	\lnlset{algo:cclone}{(f)}$A := C$\;
}
\Return $A$\;
\vspace*{0.1cm}
\end{algorithm}

We are now ready to prove some of the open questions that were stated by Mattei~\textit{et al.}~\cite{mattei2013bribery}.

\begin{theorem}\label{thm:OP_Cany}
\textsc{$(OP,A,\Cany)$-Bribery} is in \Pclass{} if $A\in \{IV, DV, IV+DV\}$.
\end{theorem}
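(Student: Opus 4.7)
The plan is to reformulate the bribery as a polynomial-sized minimum-cost flow problem, parametrized by $p$'s intended final score. Under $OP$ each voter contributes a single vote to their top candidate, so a bribery amounts to choosing a final top per voter. I would iterate over $t\in\{1,\dots,n\}$ as $p$'s intended final score and, for each $t$, seek the cheapest voter-to-top assignment in which $p$ receives at least $t$ voters and every other candidate at most $t$; the overall answer is the minimum over $t$ that fits the budget $\beta$. The main obstacle is that there are $2^m$ candidates, so I would restrict, for each voter $v$, to a polynomial set $\mathcal{C}_v$ of ``cheap'' destinations consisting of $p$ together with $v$'s $n+1$ cheapest reachable candidates other than $p$ under action $A$. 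This restriction is justified by a pigeonhole argument: among the $n+1$ non-$p$ candidates in $\mathcal{C}_v$, at least one is not used by any voter as a final destination in the optimal solution (there are only $n$ voters), so it has final score $0$. Rerouting $v$ to such a candidate is feasible (its score becomes $1\leq t$) and strictly cheaper than any destination outside $\mathcal{C}_v$; and the case where $v$'s destination equals $p$ does not require rerouting because $p\in\mathcal{C}_v$ by definition.

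Next I would compute $\mathcal{C}_v$ under $\Cany$. Under action $IV$, only independent cp-statements may be flipped; the reachable tops are in bijection with subsets of the independent issues, and the cost of each reachable top is simply the sum of the flip costs of the corresponding subset. Thus $\mathcal{C}_v$ can be obtained by applying Theorem~\ref{thm:sums} with the independent issues as ground set, sizes equal to their flip costs, and $K=n+1$, and then propagating through the unchanged dependent cp-statements to recover the actual tops. For actions $DV$ and $IV+DV$ the dependent (respectively, all) top-values may also be chosen freely, but the cost of reaching a top $c$ takes the form $\sum_j \mathbf{1}[c_j\neq \mathrm{pref}_v(X_j\mid c_{\Pa(X_j)})]\cdot \gamma_j(c_{\Pa(X_j)})$, which depends on $c$'s own parent-configurations. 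Since the CP-net is compact, each local term depends on only a constant number of coordinates, and I would enumerate the $n+1$ cheapest tops in polynomial time via a best-first dynamic program that processes the issues in topological order, analogous in spirit to Theorem~\ref{thm:sums}.

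With $\mathcal{C}_v$ in hand for each voter, I would build a flow network with source $S$, sink $T$, one node per voter, and one node per candidate in $\{p\}\cup \bigcup_v \mathcal{C}_v$, for a total of $O(n^2)$ nodes. The arcs are $S\to v$ of capacity $1$ and cost $0$ for each voter, arcs $v\to c$ of capacity $1$ and cost $\mathrm{cost}_A(v,c)\cdot Q[v]$ for each $c\in \mathcal{C}_v$, and arcs $c\to T$ of capacity $n$ if $c=p$ and $t$ otherwise, with a lower bound of $t$ on the arc into $p$ (via the standard transformation for minimum-flow constraints). A minimum-cost $S$--$T$ flow of value $n$ then encodes the cheapest bribery achieving $p$'s target; iterating over $t\in\{1,\dots,n\}$ yields a polynomial-time algorithm overall. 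I expect the main hurdle to be the enumeration step for actions $DV$ and $IV+DV$, where the cost function is not a plain subset-sum over a fixed ground set; one has to exploit the acyclicity and bounded in-degree of the CP-net to carry out the enumeration efficiently, in the spirit of Theorem~\ref{thm:sums}.
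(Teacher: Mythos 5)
Your proposal follows essentially the same route as the paper's proof: a minimum-cost flow network built for each candidate target score (the paper uses a gadget node with outgoing capacity $n-r$ in place of your lower-bound arc into $p$, which is equivalent), with each voter restricted by the same pigeonhole argument to his $O(n)$ cheapest reachable top candidates, enumerated via Theorem~\ref{thm:sums}. The one place you go beyond the paper is in spelling out that for actions $DV$ and $IV+DV$ the per-issue flip cost depends on the parent configuration of the target candidate, a subtlety the paper dismisses with ``one can simply adjust $\mathcal{A}$''; your topological-order enumeration is a reasonable way to make that step precise.
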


\begin{proof} 

In this proof, for ease of presentation, the briber's preferred candidate is called~$c_1$. Moreover, for every voter $v_i$, we denote by $\costs_i(c_j)$ the costs to bribe $v_i$ such that $c_j$ becomes his favorite candidate.
We show Theorem~\ref{thm:OP_Cany} by construction of a flow network that can be solved with a minimum cost flow algorithm (which also maximizes the flow in polynomial time, cf. Ahuja~\textit{et al.}~\cite{ahuja1993network}), similarly as initially proposed by Faliszewski~\cite{faliszewski2008nonuniform}. This method has proven to be useful for showing tractability for many voting problems and is also used by Mattei~\textit{et al.}~\cite{mattei2013bribery}.
For each~$r\in \{1,\dots, n \}$, we check if the voters can be bribed with budget~$\beta$ such that the preferred candidate~$c_1$ wins with a score of~$r$ votes. If this is possible for at least one $r$, then we accept, otherwise we reject.  For each $1\leq r \leq n$, we define a flow network consisting of a source node $s$, a target node $t$, and the following sets of nodes and directed edges. All edges have costs $0$ and capacity $1$, unless specified otherwise. 
\begin{description}
\item[V(oters):] For every voter $v_i$ we create a node $v_i$ and an edge $(s,v_i)$. The set of nodes constructed in this step is called~$V$.
\item[B(ribery):] For each voter $v_i$, we determine the set consisting of the $n$ cheapest candidates (the candidates for which the briber has to pay least if he bribes $v_i$ to make them his new top candidate) and add $c_1$ to this set (if not present within those $n$ candidates). For each candidate $c_j$ of this set, we create a node $c^i_j$. Note that at least one candidate with zero costs is contained in this set (for instance the top candidate). For each such node $c_j^i$ we create an edge $(v_i,c_j^i)$ with costs $\costs_i(c_j)$. The flow on these edges tells the briber who and how he has to bribe. We call the set of all nodes created in this step $B$.
\item[C'(ollection):] We create a node $c_j^*$ if there exists a $j\in [1,m]$ with  $c^i_j \in B, i \in [1,n]$. For each node $c^i_j \in B$, we add the edge $(c_j^i, c_j^*)$. We call the set of all nodes created in this step $C'$.
\item[Gadget node:] We create one gadget node $g$ and add the edges $(c_j^*, g)$ for all nodes $c_j^* \in C'\setminus \{c_1^*\}$. These edges all have capacity $r$. Finally we add the edge $(c_1^*,t)$ with capacity $r$ and the edge $(g,t)$ with capacity $n-r$.
\end{description}

\noindent An example for a small instance consisting of two voters is given in Figure~\ref{fig:op_Cany}.

\begin{figure}[h!]
\centering
\begin{tikzpicture} [->,thick,align=center,xscale=2,yscale=0.6]
  \path[every node/.style={minimum size=0.5cm}]
  	   (0, 0)   node (s)  {$s$}
       (1, 2)   node (l1v1) {$v_1$}
       (1,-2)   node (l1v2) {$v_2$}
       (2,0.75) node (l2o1) {$c_4^1$}
       (2,1.75) node (l2o2) {$c_3^1$}       
       (2,2.75) node (l2o3) {$c_2^1$}
       (2,3.75) node (l2o4) {$c_1^1$}       
       (2,-0.75) node (l2u1) {$c_1^2$}
       (2,-1.75) node (l2u2) {$c_5^2$}      
       (2,-2.75) node (l2u3) {$c_3^2$}
       (2,-3.75) node (l2u4) {$c_4^2$}
       (3,2) node (l3c1) {$c_1^*$}
       (3,1) node (l3c2) {$c_2^*$}
       (3,0) node (l3c3) {$c_3^*$}
       (3,-1) node (l3c4) {$c_4^*$}
       (3,-2) node (l3c5) {$c_5^*$}
       (4,-0.5)   node   (g)  {$g$}
       (5,0) node (t) {$t$}
       (1,4.75) node (a) {$V$}
       (2,4.75) node (b) {$B$}
       (3,4.75) node (b2) {$C'$};
  \path[every node/.style={font=\sffamily\small}]
    (s)		edge node[above] {} (l1v1)
    			edge node[above] {} (l1v2)
    	(l1v1)	edge node[above, pos=0.7] {$3$} (l2o1)
    			edge node[above, pos=0.7] {$2$} (l2o2)
    			edge node[above, pos=0.7] {$0$} (l2o3)
    			edge node[above, pos=0.7] {$18$} (l2o4)
    	(l1v2)	edge node[above, pos=0.7] {$6$} (l2u1)
    			edge node[above, pos=0.7] {$0$} (l2u2)
    			edge node[above, pos=0.7] {$1$} (l2u3)
    			edge node[above, pos=0.7] {$5$} (l2u4)
    	(l2o1)	edge node[above] {} (l3c4)
    	(l2o2)	edge node[above] {} (l3c3)
    	(l2o3)	edge node[above] {} (l3c2)
    	(l2o4)	edge node[above] {} (l3c1)
    	(l2u1)	edge node[above] {} (l3c1)
    	(l2u2)	edge node[above] {} (l3c5)
    	(l2u3)	edge node[above] {} (l3c3)
    	(l2u4)	edge node[above] {} (l3c4)
    	(l3c1)	edge[dashed] node[above] {$r$} (t)
    	(l3c2)	edge[dashed] node[above, pos=0.3] {$r$} (g)
    	(l3c3)	edge[dashed] node[above, pos=0.3] {$r$} (g)
    	(l3c4)	edge[dashed] node[above, pos=0.3] {$r$} (g)
    	(l3c5)	edge[dashed] node[above, pos=0.3] {$r$} (g)
    	(l3c2)	edge[dashed] node[above, pos=0.3] {$r$} (g)
    	(g)	edge[dashed] node[above, pos=0.3] {$n-r$} (t);
   
\end{tikzpicture}

\caption{Example for a flow network of an election with two voters $v_1$ and $v_2$. All edges have costs $0$ and capacity $1$. The only exceptions are the dashed edges that have a different capacity stated, and the connecting edges $(v_i,c^i_j)$ between $V$ and $B$  that have the costs to bribe $v_i$ to vote for $c_j$.}\label{fig:op_Cany}
\end{figure}
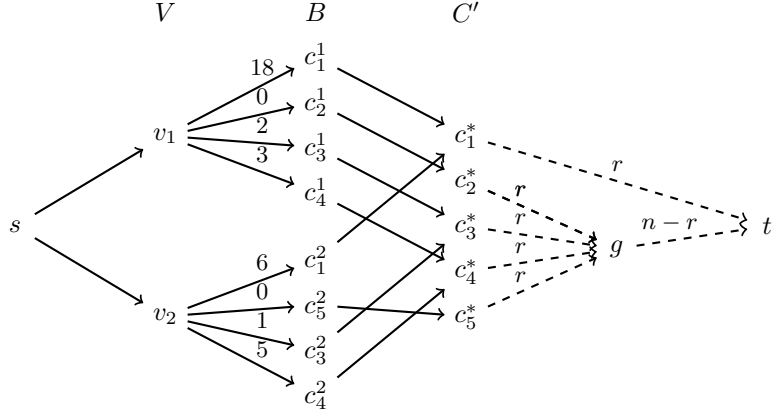

As explained above, we create $n$ such networks, each differing only in the value of~$r$ which is the maximum score all candidates can reach and the preferred candidate $c_1$ will win with, and solve the corresponding flow problem with a minimum cost flow algorithm. The resulting flow will always be $n$, but the costs may differ. 
We then choose the solution with the lowest costs, and if they do not exceed the given budget, we found an optimal one. Otherwise there is none. 
The constructed networks have a number of nodes that is polynomial in $n$ and we are only building up to~$n$ of them. This results in a polynomial running time of the overall algorithm.

For the correctness, observe that there is a flow of value~$n$ on the constructed network if and only if the bribery problem can be solved: 
Assume there exists a flow of value~$n$ in the network. Since all capacities and costs are integral, the flow is integral as well~\cite{ahuja1993network}. 
The edges connecting nodes in $C'$ with $g$ and $t$, respectively, guarantee that a flow of value $r$ is going through vertex $c_1^*$, hence ensuring that the preferred candidate~$c_1$ obtains $r$ points, and that no flow of value greater than~$r$ leaves any vertex $c_j^*\in C'\setminus\{c_1^*\}$, hence ensuring that no candidate can end up with more than $r$ points, making $c_1$ a (co-)winner. This is achieved if the briber changes those votes corresponding to the edges connecting $V$ and $B$: For each node~$v_i$, there is one edge $(v_i, c_j^i)$ connecting $v_i$ to a node $c_j^i$ of the set $B$, carrying a flow of value 1. This edge tells the briber that $v_i$ has to be bribed in such a way that $c_j$ becomes his new top candidate. Therefore, there exists a successful bribery in which $c_1$ wins with $r$ votes.  

Conversely, given a bribery that makes $c_1$ a winner of the election without exceeding budget $\beta$, we can construct a flow in the following way: Since $c_1$ is a winner of the bribed election, he must have obtained a total of $r$ points for some $1\leq r \leq n$. So we choose the network with capacity $r$ on the edges $(c_j^*,g)$ and $(c_1^*,t)$. We set the flow on the edges $(s,v_i)$ with $v_i \in V$ to~1. 
We distinguish two types of voters: 
If after the bribery, candidate~$c_j$ is on the top position for voter~$v_i$ and the node $c_j^i$ is created for $B$ (meaning that $c_j$ is among the cheapest $n$ candidates for $v_i$), we set the flow on the edges $(v_i, c_j^i)$ and $(c_j^i, c_j^*)$ to 1, otherwise we call the voter $v_i$ an {\it overpaid} voter and ignore him for now. Note that $c_1^i$ is a node in $B$ for each~$i$. Having set as much flow on the edges between nodes of $V$ and $C'$ as this rule allows, we carry on preserving the flow from nodes in $C'$ to $t$. This is easy, because there is exactly one path from each node in $C'$ to $t$. 
For each voter $v_k$ in the set of overpaid voters, the flow on the edges between the node $v_k$ and nodes of $B$ as well as on the edges leaving these nodes of $B$ has not been set yet. For each such voter $v_k$, we just have to find an \emph{augmenting path} (cf.\ the Edmond-Karp algorithm~\cite{edmonds1972theoretical} for the Ford-Fulkerson algorithm~\cite{ford1956maximal} for solving the flow problem) from $v_k$ to $t$. There are at least $n$ nodes in $C'$ and $n$ different nodes $c^k_j$ for each $v_k$, so by the pigeon hole principle, there exists at least one such path, and the cost of each such path is lower than the amount paid by the briber to bribe $v_k$. Therefore there is a flow of value~$n$ for this network with costs of at most~$\beta$. 

There are two more things to be observed.
First, as described by Mattei~\textit{et al.}~\cite[Theorem 8]{mattei2013bribery}, for each voter, it suffices to choose the $n$ cheapest candidates when constructing the set $B$ in the network instead of including all $2^m$ possible candidates: There are $n-1$ other voters who can give one vote each to a candidate. By the pigeonhole principle, it is not possible that each of the $n$ cheapest candidates for a voter already has a score of $r$ votes, hence in an optimal solution, no additional (more expensive) candidate has to be bribed.
Second, the selection of the nodes in $B$ corresponds to solving the \summenproblem{} problem, where the sizes of the elements in $\mathcal{A}$ correspond to the allowed individual costs for flipping every issue of voter~$v_i$. This can be done with the algorithm presented in the proof for Theorem~\ref{thm:sums}. Each subset $\mathcal{F}$ belonging to the solution to the \summenproblem{} algorithm represents one candidate, where the size of the elements in $\mathcal{F}$ are the costs of the required flips to bribe~$v_i$ to vote for this candidate. 
To cope with the different bribery actions, one can simply adjust $\mathcal{A}$. There are no further changes needed.
\end{proof}

Theorem~\ref{thm:OP_Cany} implies that $(OP,A,C_\textsc{level})$-bribery can be solved in polynomial time as well, because $C_\textsc{level}$  is a special case of $C_\textsc{any}$. 
In fact, the same holds for the cost schemes $\Cflip$ and $\Cdist$, but the complexity for those was already shown by Mattei~\textit{et al.}~\cite{mattei2013bribery}.

\begin{corollary}\label{cor:op}
$(OP,A,\Clevel)$-Bribery is in \Pclass{} if $A\in \{IV, DV, IV+DV\}$.
\end{corollary}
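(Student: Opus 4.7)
The plan is to exhibit a direct polynomial-time reduction from $(OP,A,\Clevel)$-\textsc{Bribery} to $(OP,A,\Cany)$-\textsc{Bribery} and then invoke Theorem~\ref{thm:OP_Cany}. The key observation is that $\Clevel$ is a highly structured special case of $\Cany$: under $\Cany$ every individual cp-statement flip is allowed to carry an arbitrary non-negative cost, whereas under $\Clevel$ the cost of flipping a cp-statement associated with issue $X_j$ is forced to be $k+1-\textrm{level}(X_j)$, where $k$ is the number of levels in the CP-net. Thus $\Clevel$ is obtained from $\Cany$ by a specific choice of the per-flip cost function.

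First I would describe the reduction explicitly. Given an instance of $(OP,A,\Clevel)$-\textsc{Bribery} with profile $\mathcal{N}_1,\dots,\mathcal{N}_n$, cost vector $\mathbf{Q}$, budget $\beta$, preferred candidate $p$, and bribery action $A$, I construct the $(OP,A,\Cany)$-instance that shares all these data and, in addition, equips each cp-statement of each $\mathcal{N}_i$ with the weight $k_i+1-\textrm{level}(X_j)$, where $k_i$ is the number of levels of $\mathcal{N}_i$ and $X_j$ is the issue owning that cp-statement. Computing $\textrm{level}(X_j)$ for every issue in every voter's CP-net requires only a topological traversal of each dependency graph, which runs in linear time in the size of the input; hence the construction is polynomial.

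Next I would verify equivalence. Since both cost schemes compute the total cost of a bribery as the sum of the costs of the individual cp-statement flips performed (with the same factor $Q[i]$ applied to voter $v_i$ in both schemes), and since the per-flip costs of the constructed $\Cany$-instance match by definition the $\Clevel$-values of the original instance, a set of flips is feasible within budget $\beta$ in the $\Clevel$-instance if and only if it is feasible within the same budget $\beta$ in the constructed $\Cany$-instance. The preferred candidate $p$, the voting rule, and the bribery action are unchanged, so $p$ wins the bribed election in one instance exactly when it does in the other.

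Finally I would apply Theorem~\ref{thm:OP_Cany}, which gives a polynomial-time algorithm for the constructed $(OP,A,\Cany)$-instance, and return its answer. There is essentially no obstacle here beyond being careful about the reduction bookkeeping; the only mild subtlety is confirming that the cost vector $\mathbf{Q}$ of individual voter factors is handled identically in both schemes, which follows from the definition of $\Cany$ and $\Clevel$ in Section~\ref{subsection:Bribery}. This establishes membership in \Pclass{} for all three bribery actions $A\in\{IV,DV,IV+DV\}$.
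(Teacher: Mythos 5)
Your proposal is correct and coincides with the paper's own argument: the paper dispenses with Corollary~\ref{cor:op} in one line by observing that $\Clevel$ is a special case of $\Cany$, so the result follows immediately from Theorem~\ref{thm:OP_Cany}. You merely spell out the trivial reduction (instantiating the per-flip costs of the $\Cany$-instance with the level-based values), which is the same idea made explicit.
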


Theorem~\ref{thm:OP_Cany} can be extended to the voting system $OK$ for $\mathcal{O}$-legal profiles, when $k$ is a power of~$2$, which is called $OK^*$ then. To show this, we use Lemma~$1$ from Mattei~\textit{et al.}:

\begin{lemma}[Mattei \textit{et al.}~\cite{mattei2013bribery}]\label{lemma:mattei1}
Given an acyclic CP-net $X$ and a constant linearization scheme across all agents and $k=2^j$, for some $j\in \mathbb{N}$, the top $k$ outcomes of $X$ are all the outcomes differing from the top one on the value of exactly $j$ issues. Moreover, given two CP-nets in the same profile and with the same top element, they have the same top $k$ elements.
\end{lemma}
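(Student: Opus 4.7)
My plan is to argue directly from the binary-vector encoding of candidates described in Section~\ref{sec:prel}. Under the fixed constant linearization $X_{i_1} > X_{i_2} > \dots > X_{i_m}$ and with the CP-net acyclic, every candidate corresponds to an $m$-bit vector, the top candidate is $(0,\dots,0)$, and candidates are ordered by the binary numbers their vectors represent. The top $k = 2^j$ candidates therefore correspond to the integers $0,1,\dots,2^j-1$, i.e.\ precisely to those vectors whose $m-j$ most significant bits---those tied to the highest-priority issues $X_{i_1}, \dots, X_{i_{m-j}}$---are all zero and whose remaining $j$ bits are arbitrary.

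Next, I would translate this back to assignments on the issues. Because the linearization forces each of the first $m-j$ issues to be independent of the last $j$, the conditionally preferred value of $X_{i_l}$ for $l\le m-j$ is determined solely by the values of its predecessors among those same $m-j$ issues. So every candidate whose first $m-j$ encoding-bits are zero assigns to $X_{i_1},\dots, X_{i_{m-j}}$ exactly the same values as the top outcome does, and can differ from the top outcome only on the $j$ lowest-priority issues. Since there are exactly $2^j = k$ such outcomes, this yields the first part of the lemma (interpreting the phrase ``differing on exactly $j$ issues'' as ``differing only on the specific set of $j$ lowest-priority issues'', which includes the top outcome itself as the degenerate case).

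For the ``moreover'' part, I would simply observe that this characterization depends only on the top outcome and on the (profile-wide constant) linearization. Two CP-nets in the same profile that share the top element must therefore share the set of top $2^j$ outcomes; the remaining cp-statements may reorder the candidates inside that block of size $2^j$, but cannot change its membership.

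The one step I expect to be the main obstacle is justifying that incrementing the encoding vector really does produce the next most preferred candidate under the CP-net's \emph{ceteris paribus} order---this is stated without proof in the preliminaries and is the crucial bridge between the combinatorial picture and the semantic one. The underlying reason is that flipping a less-significant bit (an issue further down in the linearization) to its conditionally preferred value is always an improving flip and never affects the conditional preferences of the issues above it, so lexicographic order on the vectors coincides with the CP-net's induced preference order. Once that fact is granted, the content of Lemma~\ref{lemma:mattei1} reduces to the elementary observation that, in lexicographic order on $m$-bit strings, the initial block $\{0,1,\dots,2^j-1\}$ is exactly the set of strings whose first $m-j$ bits are zero.
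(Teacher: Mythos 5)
Your argument is correct, and it is consistent with the linearization machinery the paper sets up in the preliminaries: under a constant linearization, the top $2^j$ candidates are exactly the vectors $0,\dots,2^j-1$, i.e.\ those agreeing with the top outcome on the $m-j$ highest-priority issues, and since those issues depend only on one another, this set is determined by the top outcome alone. Note, however, that the paper does not prove this lemma itself --- it is imported verbatim from Mattei \textit{et al.}~\cite{mattei2013bribery} --- so there is no in-paper proof to compare against; your write-up is a reasonable self-contained reconstruction, and your reading of ``differing on the value of exactly $j$ issues'' as ``differing only on the $j$ lowest-priority issues'' (which makes the count $2^j$ rather than $\binom{m}{j}$ and includes the top outcome itself) is the interpretation the paper relies on when it reduces $OK^*$ to $OP$ by ignoring the last $\log_2 k$ issues. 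The final point you flag --- that incrementing the encoding vector yields the next candidate in the induced order --- is indeed the definition of the linearization adopted in Section~\ref{sec:prel} rather than something the lemma must establish, and your justification that an improving flip only perturbs less significant bits is the right reason why this lexicographic order extends the CP-net's partial order.
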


We recall that existence of a constant linearization scheme across all agents means $\mathcal{O}$-legality of the profile for some given strict total order $\mathcal{O}$.
Lemma~$1$ implies that any \textsc{$(OK^*,A,C)$-Bribery} instance can be treated as a \textsc{$(OP,A,C)$-Bribery} instance, for any cost scheme $C$ and bribery action $A$, by ignoring the last $\log_2 k$ issues.

\begin{corollary}\label{cor:OK}
\textsc{$(OK^*,A,C)$-Bribery} with bribery action $A\in\{IV,DV,IV+DV\}$ and cost scheme $C\in\{\Cany,\Clevel\}$ is in \Pclass{}. 
\end{corollary}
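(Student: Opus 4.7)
The plan is to reduce any instance of \textsc{$(OK^*,A,C)$-Bribery} with $k = 2^j$ to an instance of \textsc{$(OP,A,C)$-Bribery} over a smaller set of issues, and then invoke Theorem~\ref{thm:OP_Cany} (for $C = \Cany$) and Corollary~\ref{cor:op} (for $C = \Clevel$). The reduction is driven by Lemma~\ref{lemma:mattei1}: for an $\mathcal{O}$-legal profile and $k = 2^j$, the top-$k$ set of each voter consists precisely of those outcomes that agree with his top candidate on the first $m-j$ issues in~$\mathcal{O}$, and two voters have the same top-$k$ set if and only if their top candidates coincide on these first $m-j$ issues.

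Concretely, let $\mathcal{O} = X_{i_1} > X_{i_2} > \dots > X_{i_m}$ be the common linearization scheme. I would construct the \textsc{$(OP,A,C)$-Bribery} instance by restricting every voter's CP-net to the first $m - j$ issues $X_{i_1}, \dots, X_{i_{m-j}}$, and by replacing the preferred candidate $p$ with its projection $p'$ onto these issues; the budget, cost vector $\mathbf{Q}$, and bribery action $A$ are carried over unchanged. By $\mathcal{O}$-legality, the parents of any retained issue all lie among the retained issues, so the induced dependency graph is well-defined and acyclic, the independent/dependent partition of the retained issues is preserved, and the costs assigned by $\Cany$ or $\Clevel$ to any flip within the first $m-j$ issues remain the same in the projected instance.

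Correctness rests on two observations. First, by Lemma~\ref{lemma:mattei1}, voter $v_i$ approves $p$ under $OK^*$ in the original instance if and only if $p'$ is the top candidate of $v_i$'s projected CP-net under $OP$; hence approval scores in the two instances coincide, and $p$ is a co-winner in the bribed $OK^*$ election if and only if $p'$ is a co-winner in the bribed $OP$ election. Second, in any reasonable bribery, no cp-statement associated with one of the last $j$ issues is ever flipped, since such a flip leaves every voter's top-$k$ set untouched but can only increase the cost; thus optimal briberies on the two instances are in cost-preserving one-to-one correspondence. With the reduction established, the claimed polynomial-time bound follows immediately from Theorem~\ref{thm:OP_Cany} and Corollary~\ref{cor:op} for every $A \in \{IV, DV, IV+DV\}$.

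The only subtlety I anticipate is verifying that the reduction is faithful with respect to the bribery action $A$, specifically for $IV$: one must check that an issue among $X_{i_1}, \dots, X_{i_{m-j}}$ is independent in the projected CP-net if and only if it was independent in the original CP-net. This is immediate from $\mathcal{O}$-legality, since any parent of a retained issue must precede it in $\mathcal{O}$ and hence is itself retained, so no discarded issue can be among its parents. The remaining bookkeeping (levels for $\Clevel$, per-flip prices for $\Cany$) behaves similarly and carries over without modification.
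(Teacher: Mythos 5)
Your proposal is correct and follows essentially the same route as the paper: the paper likewise invokes Lemma~\ref{lemma:mattei1} to observe that, for an $\mathcal{O}$-legal profile with $k=2^j$, the $OK^*$ instance can be treated as an $OP$ instance by ignoring the last $\log_2 k$ issues, and then appeals to Theorem~\ref{thm:OP_Cany} and Corollary~\ref{cor:op}. Your write-up simply spells out the projection and the bookkeeping (preservation of acyclicity, the $IV$/$DV$ partition, and the $\Cany$/$\Clevel$ costs) that the paper leaves implicit.
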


Similarly as in the proof of Theorem~\ref{thm:OP_Cany}, one can show tractability for the voting system $OV$ with cost scheme $\Cany$.

\begin{theorem}\label{thm:OV_Cany}
\textsc{$(OV,A,C_\textsc{any})$-Bribery} is in \Pclass{} if $A\in \{IV, DV, IV+DV\}$.
\end{theorem}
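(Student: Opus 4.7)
My plan would be to mimic the min-cost-flow construction from the proof of Theorem~\ref{thm:OP_Cany}, adapted to the veto setting. Let $c_1$ denote the preferred candidate and, for a voter $v_i$, let $\costs_i(c_j)$ be the cost of bribing $v_i$ so that $c_j$ becomes the new bottom (vetoed) candidate of $v_i$'s CP-net. Under $OV$, $c_1$ is a (co-)winner iff it receives the fewest vetoes, so I would iterate over $r \in \{0,1,\dots,n\}$, treating $r$ as the target number of vetoes $c_1$ receives after bribery, and for each $r$ check via a single min-cost flow computation whether a bribery of cost at most $\beta$ realizing this scenario exists.

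The flow network would mostly mirror that of Theorem~\ref{thm:OP_Cany}: a source $s$ and sink $t$; an edge $(s,v_i)$ of unit capacity for every voter; for every voter $v_i$, a set $B$ consisting of the $n$ cheapest candidates to whose position $v_i$'s bottom can be moved (computed via the algorithm of Theorem~\ref{thm:sums}), always including $c_1$; one node $c_j^i$ per candidate $c_j \in B$ with edge $(v_i,c_j^i)$ of cost $\costs_i(c_j)$; and collection nodes $c_j^*$ with the corresponding $(c_j^i,c_j^*)$ edges. The essential modification compared with $OP$ lies in the arcs leaving the collection layer: the edge $(c_1^*,t)$ receives capacity~$r$, while for every other candidate $c_j$ appearing in the network the edge $(c_j^*,t)$ carries lower bound~$r$ and upper bound~$n$, so as to force $c_1$ to actually attain the minimum veto count. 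A flow of value $n$ on such a network then encodes a choice of which candidate each voter will veto after bribery, with $c_1$ obtaining exactly $r$ vetoes and every other candidate at least $r$ vetoes.

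The observation that turns this into a polynomial-time algorithm is a pigeonhole bound: the total number of vetoes is $n$ and, for $c_1$ to co-win, every candidate (including $c_1$) must receive at least $r$ of them, so any feasible $r$ satisfies $r\cdot 2^m \leq n$. Consequently, whenever $2^m > n$ only the case $r=0$ is relevant; the lower-bound constraints vanish and the construction collapses to the ordinary min-cost max-flow used in Theorem~\ref{thm:OP_Cany}. In the complementary regime $2^m \leq n$ the full candidate set has polynomial size, so I can enlarge the network to include \emph{all} candidates (not merely those reached from some $B$), compute $\costs_i(c_j)$ directly for each pair, and then invoke the standard lower-bound-to-flow reduction (see Ahuja~\textit{et al.}~\cite{ahuja1993network}) to solve the resulting instance in polynomial time. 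The hard part is precisely this lower-bound handling for the non-preferred candidates, but the pigeonhole argument confines it to the regime where the candidate set is itself polynomially large. Correctness---that an integer min-cost flow of value $n$ corresponds to a cheapest valid bribery and vice versa---as well as the treatment of the bribery actions $A \in \{IV,DV,IV+DV\}$ by suitably restricting which cp-statements contribute to $\costs_i(c_j)$, transfer essentially verbatim from the proof of Theorem~\ref{thm:OP_Cany}.
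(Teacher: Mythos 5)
Your proof is correct, and it follows the same skeleton as the paper's argument---a case split on whether $n<2^m$ or $n\geq 2^m$, followed by a family of min-cost flow computations indexed by a target score $r$---but the flow gadget itself is genuinely different. The paper avoids lower bounds entirely by recasting veto as $(2^m-1)$-approval: each voter pushes $2^m-1$ units of flow, one per approved (non-vetoed) candidate, $r$ is the target \emph{approval} score of $c_1$, and a gadget node caps every other candidate's approval count at $r$ from above; a bribery moving a veto from $c_x$ to $c_j$ is encoded by rerouting the unit leaving the approval node $\hat{c}_j^i$ into the node $c_x^i$ at the corresponding cost. You instead keep the veto framing, with one unit per voter and lower bounds of $r$ on the arcs out of the non-preferred collection nodes, and then appeal to the standard elimination of lower bounds in min-cost flow. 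Both are sound, and your pigeonhole observation that any feasible $r$ satisfies $r\cdot 2^m\leq n$ correctly confines the lower-bound machinery to the regime where the candidate set is polynomially large---which is exactly the regime in which the paper's network is polynomial as well. Two small remarks. First, in the regime $2^m>n$ only $r=0$ survives, and there the whole flow apparatus degenerates to independently rerouting each veto currently cast against $c_1$ to the cheapest admissible alternative, which is how the paper phrases that case. Second, if you retain the ``$n$ cheapest veto targets'' restriction in that regime, make sure each voter's list contains at least one admissible target different from $c_1$ (e.g., take the $n$ cheapest targets other than $c_1$ and then add $c_1$); otherwise the degenerate situation of very few voters is mishandled. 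Neither point affects the validity of the overall argument.
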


\begin{proof}
We distinguish two cases: $n < 2^m$ and $n \geq 2^m$. The first case is straightforward, the second case can be translated into a network flow problem. The solution flow indicates who has to be bribed and how such that the preferred candidate $p$ wins the election.
\begin{description}
\item[$n<2^m$:] By the pigeonhole principle, there must exist at least one candidate who got no veto at all and is therefore the winner of the election. The preferred candidate $p$ has to get rid of all his vetoes to be one of the winners. Therefore the briber has to bribe every voter who vetoes against $p$, by choosing the cheapest bribery possible.
\item[$n \geq 2^m$:] We construct a flow network similar to the one in the proof of Theorem~\ref{thm:OP_Cany}. Again, the preferred candidate~$p$ of the briber is called $c_1$.
We take a source $s$ and a target $t$ for the network and construct the following sets of vertices and edges. All edges have costs $0$ and capacity $1$, unless specified otherwise.  
\begin{description}
\item[General:] 
\item[V(oters):] For every voter $v_i$ we create a node $v_i$ and an edge $(s,v_i)$. These edges each have capacity $2^m-1$.
\item[A(pproval):] For each voter $v_i$ and each candidate $c_j$ who gets approved by $v_i$, we create a node $\hat{c}_j^i$ together with the edge $(v_i, \hat{c}_j^i)$.
\item[B(ribery):] For each voter $v_i$ and every candidate $c_i$, we create a node $c_j^i$. Let $c_x$ be the candidate $v_i$ vetoes against. For each candidate $c_j \neq c_x$ we then create the edge $(\hat{c}^i_j, c^i_j)$ with costs~$0$. If it is allowed (depending on the bribery action) to bribe voter~$v_i$ to veto for candidate~$c_j$, we additionally create the edge $(\hat{c}^i_j, c^i_x)$ with the costs to bribe~$v_i$ this way. Flow in the solution on those additional edges indicates the veto is bribed from $c_x$ to $c_j$. Note that the costs for the edge $(\hat{c}^i_j, c^i_x)$ can be calculated in time polynomial in $m$.
\item[C'(ollection):] For each candidate $c_j$ we create a node $c_j^*$. For each node $c^i_j$ created in the previous step, we add the edge $(c_j^i, c_j^*)$.
\item[Gadget-node:] We create one gadget-node $g$ and add the edges $(c_j^*, g)$ for all nodes $c_j^*$ created in the previous step except for $c_1^*$. These edges all have capacity $r$. Finally we add the edge $(c_1^*,t)$ with capacity $r$ and the edge $(g,t)$ with capacity $n\cdot(2^m-1)-r$.
\end{description}
\end{description}

An example is given in Figure~\ref{fig:unweighted_OV_Cany}.

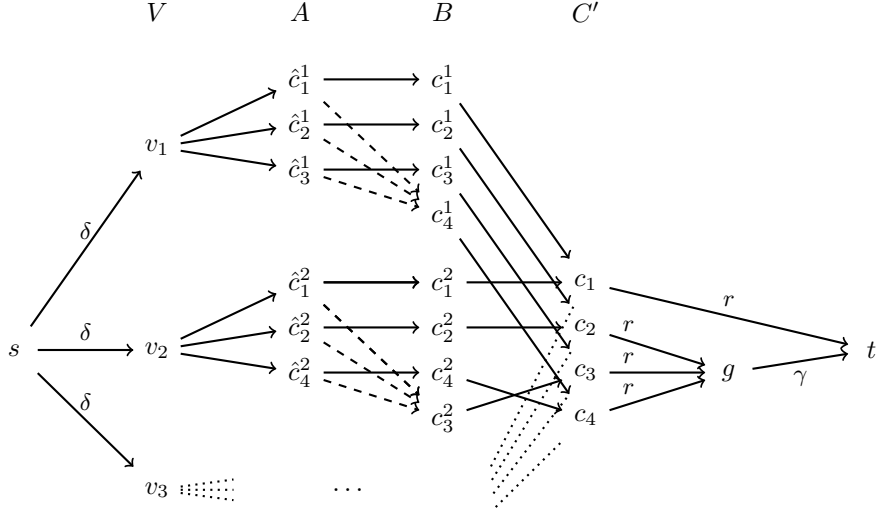
\begin{figure}[h!]
\centering
\begin{tikzpicture} [->,thick,align=center,xscale=1.9,yscale=0.6] 
  \path[every node/.style={minimum size=0.6cm}]
  	   (0, 0)	node (s)  {$s$}
       (1, 4.5)	node (V1) {$v_1$}
       (1,0)		node (V2) {$v_2$}
       (1,-3.1)	node (V3) {$v_3$}
       (2,6)		node (A11) {$\hat{c}_1^1$}
       (2,5)		node (A12) {$\hat{c}_2^1$}
      (2,4)		node (A13) {$\hat{c}_3^1$}
       (2,1.5)	node (A21) {$\hat{c}_1^2$}
       (2,0.5)	node (A22) {$\hat{c}_2^2$}
      (2,-0.5)	node (A23) {$\hat{c}_4^2$}
       (3,6)		node (B11) {$c_1^1$}
       (3,5)		node (B12) {$c_2^1$}
      (3,4)		node (B13) {$c_3^1$}
       (3,3)		node (B14) {$c_4^1$}
       (3,1.5)	node (B21) {$c_1^2$}
       (3,0.5)	node (B22) {$c_2^2$}
       (3,-0.5)	node (B23) {$c_4^2$}
       (3,-1.5)	node (B24) {$c_3^2$}
       (4,1.5)	node (Cc1) {$c_1$}
       (4,0.5)	node (Cc2) {$c_2$}
       (4,-0.5)	node (Cc3) {$c_3$}
       (4,-1.5)	node (Cc4) {$c_4$}
       (5,-0.5)	node (g) {$g$}
       (6, 0)	node (t) {$t$}
       (1,7.5)	node (v) {$V$}
       (2,7.5)	node (a) {$A$}
       (3,7.5)	node (b) {$B$}
       (4,7.5)	node (c) {$C'$}
       (1.7,-2.8)	node (v31) {}
       (1.7,-3.1)	node (v32) {}
       (1.7,-3.4) node (v33) {}
       (2.35,-3.1)	node (dots) {\dots}
       (3.3,-2.8)	node[minimum size=0cm] (v31_2) {}
       (3.3,-3.1)	node[minimum size=0cm] (v32_2) {}
       (3.3,-3.4)	node[minimum size=0cm] (v33_2) {}
       (3.3,-3.7)	node[minimum size=0cm] (v34_2) {};
   
  \path[every node/.style={font=\sffamily\small}]
    		(s)		edge node[above] {$\delta$} (V1)
	    			edge node[above] {$\delta$} (V2)
    				edge node[above] {$\delta$} (V3)
	    	(V1)		edge node[above] {} (A11)
    				edge node[above] {} (A12)
				edge node[above] {} (A13)
	    	(V2)		edge node[above] {} (A21)
    				edge node[above] {} (A22)
    				edge node[above] {} (A23)
    		(A11)	edge node[above] {} (B11)
    				edge[dashed] node[above] {} (B14)
    		(A12)	edge node[above] {} (B12)
    				edge[dashed] node[above] {} (B14)
    		(A13)	edge node[above] {} (B13)
    				edge[dashed] node[above] {} (B14)
    		(A21)	edge node[above] {} (B21)
    				edge[dashed] node[above] {} (B24)
		(A21)	edge node[above] {} (B21)
    				edge[dashed] node[above] {} (B24)
    		(A22)	edge node[above] {} (B22)
    				edge[dashed] node[above] {} (B24)
    		(A23)	edge node[above] {} (B23)
    				edge[dashed] node[above] {} (B24)
    		(B11)	edge node[above] {} (Cc1)
    		(B12)	edge node[above] {} (Cc2)
    		(B13)	edge node[above] {} (Cc3)
	    	(B14)	edge node[above] {} (Cc4)
    		(B21)	edge node[above] {} (Cc1)
    		(B22)	edge node[above] {} (Cc2)
    		(B23)	edge node[above] {} (Cc4)
	    	(B24)	edge node[above] {} (Cc3)
	    	(v31_2)	edge[-,dotted] node[above] {} (Cc1)
    		(v32_2)	edge[-,dotted] node[above] {} (Cc2)
    		(v33_2)	edge[-,dotted] node[above] {} (Cc3)
	    	(v34_2)	edge[-,dotted] node[above] {} (Cc4)
	    	(V3) 	edge[-,dotted] node[above] {} (v31)
    		(V3)		edge[-,dotted] node[above] {} (v32)
	    	(V3)		edge[-,dotted] node[above] {} (v33)
	    	(Cc1)	edge node[above] {$r$} (t)
    		(Cc2)	edge node[above, pos=0.2] {$r$} (g)
    		(Cc3)	edge node[above, pos=0.2] {$r$} (g)
    		(Cc4)	edge node[above, pos=0.2] {$r$} (g)
    		(g)		edge node[below] {$\gamma$} (t);
   
\end{tikzpicture}
\caption{Example of a flow network created from an election with three voters and $m=2$ resulting in $4$ candidates. Only the first two voters are shown in detail. Voter $v_1$ casts his veto against $c_4$, while voter $v_2$ casts his veto against $c_3$. The costs of all edges are $0$ except for the dashed $(\hat{c}^i_j,c^i_x)$ ones which carry the costs to bribe $v_i$ in a way to veto against $c_j$ (instead of $c_x$). The capacities are all $1$ except for the edges where a different value is specified in the graph. Here, $\delta=2^m-1$ and $\gamma=n\cdot(2^m-1)-r$, where~$r$ ranges from 1 to $n$ for the $n$ networks constructed.}\label{fig:unweighted_OV_Cany}
\end{figure}

We proceed as in the proof of Theorem~\ref{thm:OP_Cany}. We again construct a flow network for each~$r$ with  $1\leq r\leq n\cdot (2^m-1)$ and take the solution with a flow of $n\cdot(2^m-1)$ and minimal costs as the solution to the bribery problem. Again, solving the flow problem is equivalent to solving the bribery problem. We remark that the case $n=2^m$ can also be handled with a smaller network.
\end{proof}

As before, $C_\textsc{level}$ is a special case of $C_\textsc{any}$, implying  the following corollary.

\begin{corollary}\label{cor:ov_Clevel}
$(OV,A,C_\textsc{level})$-Bribery is in \Pclass{} if $A\in \{IV, DV, IV+DV\}$.
\end{corollary}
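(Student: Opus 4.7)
The plan is to establish Corollary~\ref{cor:ov_Clevel} by exhibiting a polynomial-time reduction from $(OV,A,\Clevel)$-Bribery to $(OV,A,\Cany)$-Bribery and then invoking Theorem~\ref{thm:OV_Cany}. The key observation is that $\Clevel$ is simply a structured special case of $\Cany$: in $\Cany$, each flippable cp-statement is assigned an arbitrary individual cost, whereas in $\Clevel$ the cost of flipping a cp-statement associated with issue $X_j$ is rigidly determined by the formula $k+1-\text{level}(X_j)$, where $k$ is the number of levels in the CP-net of the voter in question.

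Concretely, given an instance of $(OV,A,\Clevel)$-Bribery consisting of a profile of CP-nets, a preferred candidate $p$, a cost vector $\mathbf{Q}$, and a budget $\beta$, I would construct an equivalent instance of $(OV,A,\Cany)$-Bribery as follows. For each voter $v_i$, compute $k_i$ (the number of levels of $v_i$'s dependency graph) and, for each issue $X_j$, compute $\text{level}(X_j)$ in $v_i$'s CP-net; this is possible in linear time by a topological traversal since the CP-nets are acyclic. Then assign to every flippable cp-statement associated with $X_j$ in $v_i$'s CP-net the $\Cany$-cost $k_i+1-\text{level}(X_j)$. Leave the profile, $p$, $\mathbf{Q}$, and $\beta$ unchanged. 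Because the bribery actions permitted under $A$ and the vote-counting for $OV$ are identical in the two variants, and because the cost computed by $\Cany$ for any bribery on the new instance equals the cost computed by $\Clevel$ on the original instance (here we use the remark that $\text{flip}(X_j)\in\{0,1\}$ for every reasonable bribery under the voting rules considered, so the $\Clevel$-cost is literally a sum of per-flip weights), the two instances are yes-instances simultaneously.

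Since this transformation runs in time polynomial in the input size, and by Theorem~\ref{thm:OV_Cany} the resulting $(OV,A,\Cany)$-Bribery instance can be solved in polynomial time for every $A\in\{IV,DV,IV+DV\}$, we conclude that $(OV,A,\Clevel)$-Bribery is likewise in \Pclass. There is essentially no obstacle in this argument; the only point that deserves mention is the justification (already provided by the remark in Section~\ref{subsection:Bribery}) that for $OV$ any reasonable bribery flips at most one cp-statement per issue, so the $\Clevel$-formula does indeed coincide with a per-flip weighted sum of the type encoded by $\Cany$.
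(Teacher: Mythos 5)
Your proposal is correct and matches the paper's argument exactly: the paper derives this corollary in one line by observing that $\Clevel$ is a special case of $\Cany$ and invoking Theorem~\ref{thm:OV_Cany}, which is precisely the reduction you spell out (including the justification via the remark that $\text{flip}(X_j)\in\{0,1\}$ for reasonable briberies). Your write-up is simply a more detailed version of the same one-step specialization.
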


\subsection{Results for the weighted case}\label{subsec:weighted-pos}

In this section, we analyze the weighted version of the bribery problem. Mattei \textit{et al.}~\cite{mattei2013bribery} also consider the case of weighted voters for the voting system $SM$ for which they obtain \NP-completeness. To the best of our knowledge, no result in the weighted case for $OP$, $OV$, and $OK$ has been published yet. We show for almost all of these remaining systems that the bribery problem is \NP-complete as well by reductions from the {\sc Partition} problem.

The idea in all of the reductions in this section is the following.
We construct an election in which one candidate called $c_2$ is the winner with a total of $2\summe$ points which he obtains by a set of voters whose weights correspond to the elements in the set to be partitioned. The preferred candidate~$p$ obtains~$\summe$ points. The instance is built in such a way that for the briber, the only way to make $p$ win consists in  transferring points from $c_2$ to another candidate $c_3$, thus solving the {\sc Partition} problem, such that $c_2, \, c_3$ and $p$ end up with the same score of~$\summe$ as co-winners.

\begin{theorem}\label{thm:weighted-op}
\textsc{Weighted-$(OP,A,C)$-Bribery} is \NP-complete with bribery action $A\in\{IV,DV\}$ and cost scheme $C\in \{\Cequal,\Cany, \Clevel, \Cdist, \Cflip\}$.
\end{theorem}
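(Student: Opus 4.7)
Containment in NP is immediate: given a bribery, that is, a list of flipped cp-statements per voter, the resulting $OP$ scores and the total cost can be computed in polynomial time. For hardness, I reduce from \SpecialPartition{} using the template sketched just before the theorem. In the built election $c_2$ will start at $2\psi$ points and $p$ at $\psi$ points, and the only useful bribery will transfer weight from $c_2$ into a third candidate $c_3$, so that $p$ becomes a co-winner iff the transferred weight is exactly $\psi$.

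Given $\mathcal{A}=\{a_1,\dots,a_n\}$ with $\sum a_i=2\psi$, I use two binary issues $X,Y$ with $X$ independent and $Y$ depending on $X$ (an $\mathcal{O}$-legal profile for $\mathcal{O}=X>Y$). For each $a_i$ I create a partition voter $v_i$ of weight $a_i$, and I add one base voter of weight $\psi$ whose top is $p$. For $A=IV$, set $c_2=xy$, $c_3=\bar xy$, $p=x\bar y$, and take $v_i$'s cp-table to be $x>\bar x$, $x:y>\bar y$, $\bar x:y>\bar y$, so that $v_i$'s top is $c_2$; the unique $IV$-flip available on $v_i$ toggles the independent cp-statement of $X$, sending $v_i$'s top to $c_3$, while $p$ (differing from $c_2$ only in the dependent issue $Y$) is unreachable by $IV$-bribery. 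For $A=DV$, set $c_2=xy$, $c_3=x\bar y$, $p=\bar xy$, and use $v_i$'s cp-table $x>\bar x$, $x:y>\bar y$, $\bar x:\bar y>y$; the only top-altering $DV$-flip is $x:y>\bar y\mapsto x:\bar y>y$, producing top $c_3$, and since $DV$-flips leave $X$ fixed at $x$, the candidate $p=\bar xy$ is unreachable from a partition voter.

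Across all five cost schemes, the useful per-voter bribery costs a constant $\gamma_C$: $\gamma_C=1$ for $\Cequal$ and $\Cflip$; the level-weight of the flipped issue for $\Clevel$; the rank of $c_3$ under the fixed linearization for $\Cdist$ (identical across partition voters, which share a cp-table); and, for $\Cany$, the value I declare for each flip (say $1$). Setting $Q[i]=1$ for all $i$ and budget $\beta=n\gamma_C$ makes every subset of partition voters affordable. If $S$ is the bribed subset with $T=\sum_{i\in S}a_i$, the post-bribery scores are $c_2{:}\,2\psi-T$, $c_3{:}\,T$, $p{:}\,\psi$, and the fourth candidate stays at $0$; hence $p$ is a co-winner iff $T=\psi$, which is exactly the \SpecialPartition{} condition. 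The construction is clearly polynomial-time.

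The main subtlety I anticipate is ruling out ``unintended'' alternative briberies. For $A=DV$ the partition voter has two dependent cp-statements of $Y$; I must argue that flipping the $\bar x$-branch, or both branches simultaneously, cannot create a cheaper useful bribery reaching some other candidate. This follows from the remark preceding the theorem (for $OP$ a reasonable bribery satisfies $\text{flip}(X_j)\in\{0,1\}$) together with the observation that while $X$ remains $x$, cp-statements under the $\bar x$-context are irrelevant to the top. The $A=IV$ case is easier since only one independent cp-statement exists. Finally, the base voter offers no incentive to bribe, as their top is already $p$, and any flip only decreases $p$'s score.
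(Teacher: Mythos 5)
Your proof is correct and follows essentially the same route as the paper's: a reduction from \SpecialPartition{} with one $\psi$-weighted voter for $p$, weight-$a_i$ voters initially supporting a candidate $c_2$ with score $2\psi$, and a choice of dependency structure that lets the restricted bribery action move each partition voter only to a single third candidate (never to $p$), so that $p$ co-wins iff exactly $\psi$ of weight is transferred. The only differences are cosmetic: you spell out explicit cp-tables and a finite budget $\beta=n\gamma_C$, whereas the paper uses one shared construction for both $IV$ and $DV$ and simply assumes an unlimited budget since no costs matter.
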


\begin{proof}
We show Theorem~\ref{thm:weighted-op} by reduction from \SpecialPartition{}. Let $I_P$ be an instance of \SpecialPartition{} with $\mathcal{A}= \{a_1, \dots, a_{\ell}\}$ and $\sum_{a\in \mathcal{A}} a = 2\summe$. The instance $I_B$ of \textsc{Weighted-$(OP,A,C)$-Bribery} that is constructed from $I_P$ has two issues~$X$ and~$Y$ with domains~$\{x,\overline{x}\}$ and ~$\{y,\overline{y}\}$, respectively, and therefore four candidates. For the bribery action $DV$ these are the following:

\begin{description}
\item[$xy$] The preferred candidate $p$ of the briber. He has one~$\summe$-weighted voter voting for him.
\item[$\overline{x}y$] The candidate $c_2$, who starts as the winner with $2\summe$ votes.
\item[$\overline{x}\overline{y}$] The candidate $c_3$, for whom no one votes initially, but who will win together with $p$ and $c_2$ if there exists a partition.
\item[$x\overline{y}$] The candidate $c_4$ who is not important for the case of bribery action $DV$ but for the case of bribery action $IV$.
\end{description}

\noindent We achieve this by constructing the following voters:

\begin{center}
\begin{tabular}{ccccc}
\toprule 
voter~ & top candidate~ & dependency~ & weight~ & voting for\\
\cmidrule{1-5}
$v_1$ & $xy$ & none & $\summe$ & $p$  \\ 
$v_2$ & $\overline{x}y$ & $X\rightarrow Y$ & $a_1$ & $c_2$ \\ 
$v_3$ & $\overline{x}y$ & $X\rightarrow Y$ & $a_2$ & $c_2$ \\ 
\multicolumn{5}{c}{\vdots}\\ 
$v_{\ell+1}$ & $\overline{x}y$ & $X\rightarrow Y$ & $a_{\ell}$ & $c_2$ \\
\bottomrule
\end{tabular}
\end{center}

\noindent Note that the briber is only allowed to bribe the dependent issue $Y$ for every voter except $v_1$, so he cannot transfer points from $c_2$ to $p$. He can just split the points of $c_2$ among $c_2$ and $c_3$, in which case they win together with $p$. In every other case~$p$ loses. \\
More formally, we show that $I_P$ is a yes-instance if and only if $I_B$ is a yes-instance.
Let $\mathcal{A'}\subset \mathcal{A}$ be a solution to $I_P$. Then the briber bribes those voters whose weights are the elements of~$\mathcal{A'}$. This results in a score of~$\summe$ for $p, \, c_2$ and $c_3$, so $I_B$ is a yes-instance.\\
Conversely, let $I_B$ be a yes-instance, implying there is a bribery that makes $p$ a (co-)winner of the given election. Since the briber can only bribe the dependent issue $Y$, he can make the voters $v_2, \dots , v_{\ell+1}$ vote for the candidate $\overline{x}\overline{y} = c_3$ only. The only possible way to make $p$ a (co-)winner is hence to split the points of $c_2$ among $c_2$ and $c_3$ so that they both end up with a score of~$\summe$. The weights of the voters that have to be bribed to make them vote for $c_3$ are then the elements of the subset~$\mathcal{A}' \subset \mathcal{A}$ that solves $I_P$.\\
Since no costs are involved in this reduction, it works for every cost scheme considered, assuming an unlimited budget.

For the case $IV$, we can use the same construction. In this case, the briber can only bribe the independent issue~$X$, hence split the points of $c_2$ among $c_2$ and $c_4= x\overline{y}$, which is again the only possible way to make $p$ a winner. 
\end{proof}

Since $OP$ is a special case of $OK$, we obtain the following corollary.

\begin{corollary}\label{cor:weighted_OK}
\textsc{Weighted-$(OK,A,C)$-Bribery} is \NP-complete for bribery action $A\in\{IV,DV\}$ and cost scheme $C\in \{\Cequal,\Cany, \Clevel, \Cdist, \Cflip\}$.
\end{corollary}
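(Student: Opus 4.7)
The plan is to piggyback on Theorem~\ref{thm:weighted-op} via the observation that $OP$ is literally the instantiation of $OK$ with $k = 1$. Concretely, I would take, for each choice of $A \in \{IV, DV\}$ and $C \in \{\Cequal, \Cany, \Clevel, \Cdist, \Cflip\}$, the \SpecialPartition-reduction constructed in the proof of Theorem~\ref{thm:weighted-op} and reinterpret its output as a \textsc{Weighted-$(OK, A, C)$-Bribery} instance by adjoining the parameter $k := 1$ to the input. Under this value of $k$, the single approved candidate of each voter is precisely his top candidate, so the weighted $OK$-score of every candidate coincides with his weighted $OP$-score, and the set of successful briberies in the two instances is the same. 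The \SpecialPartition-equivalence already established in the proof of Theorem~\ref{thm:weighted-op} is therefore preserved verbatim, yielding \NP-hardness.

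For \NP-membership, I would take as certificate the per-voter list of cp-statement flips describing the bribery; by compactness of the CP-nets, this list has size polynomial in the input. The cost of such a bribery is computable in polynomial time under each of the five cost schemes considered. To verify that $p$ is a co-winner of the bribed election under $OK$, one uses the binary-vector encoding of candidate ranks in acyclic CP-nets recalled in Section~\ref{sec:prel} to decide, for each voter, whether $p$ lies in his top-$k$, and then tallies the weighted approvals over the at most $n\cdot k$ distinct candidates that actually receive a vote, comparing the totals against the weighted approvals of $p$.

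I do not expect a genuine obstacle in this corollary. It is in essence a direct consequence of Theorem~\ref{thm:weighted-op} together with the semantic identity that $OP$ and $OK$ with $k=1$ are the same rule; the only minor point to spell out is that adjoining $k := 1$ to the reduction's output yields a well-formed \textsc{Weighted-$(OK, A, C)$-Bribery} instance, which follows immediately from the problem definition in Section~\ref{subsection:Bribery}.
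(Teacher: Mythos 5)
Your proposal is correct and coincides with the paper's own argument: the paper derives this corollary in one line from Theorem~\ref{thm:weighted-op} precisely because $OP$ is the special case of $OK$ with $k=1$, which is exactly the reduction-reuse you describe. The extra details you supply (well-formedness of the $k=1$ instance and \NP-membership) are compatible with, and merely more explicit than, what the paper states.
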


For the voting rule $OV$ we can also reduce from the partition problem when we restrict the briber to the $DV$ actions only and obtain the following theorem. The cases of $IV$ and $IV+DV$ are not covered yet.

\begin{theorem}\label{thm:ov-weighted}
\textsc{Weighted-$(OV,DV,C)$-Bribery} is \NP-complete for each cost scheme $C\in \{\Cequal,\Cany, \Clevel, \Cdist, \Cflip\}$.
\end{theorem}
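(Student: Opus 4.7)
The plan is to reduce from \SpecialPartition{} in the spirit of Theorem~\ref{thm:weighted-op}, but exploiting the veto mechanism of $OV$: the briber will be forced to transfer weighted vetoes from a heavily vetoed candidate to a currently unvetoed one. Given $\mathcal{A}=\{a_1,\dots,a_\ell\}$ with $\sum_{a\in\mathcal{A}}a=2\summe$, I would build a two-issue election with issues $X,Y$ and four candidates $xy,\overline{x}y,x\overline{y},\overline{x}\overline{y}$, set $p:=xy$, and arrange the initial weighted vetoes to be $\summe,\summe,2\summe,0$ on $p,\overline{x}y,x\overline{y},\overline{x}\overline{y}$ respectively, so that $p$ becomes a co-winner exactly when the distribution is balanced to $(\summe,\summe,\summe,\summe)$.

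To realise this, I would include two ``immovable'' voters $v_1$ and $v_{\ell+2}$ of weight $\summe$ each, whose CP-nets have no dependency arcs, with worst outcomes $p$ and $\overline{x}y$ respectively; because they have no dependent issues, $DV$ cannot touch them. The remaining $\ell$ voters $v_2,\dots,v_{\ell+1}$ would carry weights $a_1,\dots,a_\ell$, have the single dependency $Y\to X$, the unconditional preference $y>\overline{y}$ on $Y$, and the conditional preference $\overline{x}>x$ on $X$ under both values of $Y$. A direct computation shows that each such voter vetoes $x\overline{y}$, and that flipping the dependent cp-statement $\overline{y}:\overline{x}>x$ redirects his veto to $\overline{x}\overline{y}$; crucially, no $DV$-flip can shift the veto onto $xy$ or $\overline{x}y$, since doing so would require modifying the independent issue $Y$. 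The budget $\beta$ is then taken large enough that every $DV$-bribery of a subset of $v_2,\dots,v_{\ell+1}$ is affordable under each of the five cost schemes; since each reasonable bribery uses exactly one cp-statement flip per bribed voter, a value of order $\ell$ suffices uniformly for $\Cequal,\Cflip,\Clevel,\Cany$, and the same holds for $\Cdist$ after a straightforward accounting.

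The equivalence is then immediate. If $\mathcal{A}'\subseteq\mathcal{A}$ sums to $\summe$, bribing exactly the voters indexed by $\mathcal{A}'$ yields the balanced distribution $(\summe,\summe,\summe,\summe)$ and $p$ co-wins. Conversely, since $v_1$ and $v_{\ell+2}$ are untouchable under $DV$, the only bribery possible shifts a total weight $t$ from $x\overline{y}$ to $\overline{x}\overline{y}$, producing $(\summe,\summe,2\summe-t,t)$; co-winnership of $p$ forces $2\summe-t\ge\summe$ and $t\ge\summe$, hence $t=\summe$, and the weights of the bribed voters form a \SpecialPartition{} witness. Membership in \NP{} is standard, since any candidate bribery can be described in polynomial space and its cost and resulting $OV$-outcome verified in polynomial time.

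The main obstacle is ruling out any ``cheating'' bribery that makes $p$ a co-winner without corresponding to a genuine partition; two structural invariants must hold simultaneously, namely that the vetoes originally on $p$ and $\overline{x}y$ are immovable, and that the vetoes on $x\overline{y}$ can only be redirected to $\overline{x}\overline{y}$. The first is enforced by giving $v_1$ and $v_{\ell+2}$ CP-nets with no dependent issues at all, and the second by arranging the CP-nets of $v_2,\dots,v_{\ell+1}$ so that the $y$-coordinate of the worst outcome is pinned by the independent issue $Y$, whose cp-statement $DV$ is forbidden to flip. Getting these two invariants right simultaneously, for all five cost schemes at once, is the crux of the construction.
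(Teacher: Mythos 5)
Your proposal is correct and follows essentially the same route as the paper's proof: a reduction from \SpecialPartition{} with two issues and four candidates, two dependency-free weight-$\summe$ voters pinning vetoes on $p$ and on a second candidate, and $\ell$ partition-weighted voters whose single dependent issue lets $DV$ redirect their vetoes only between the remaining two candidates, forcing the balanced split $(\summe,\summe,\summe,\summe)$; your construction differs from the paper's only by swapping which issue is independent and permuting the candidate labels. The unlimited-budget argument to cover all five cost schemes at once is also exactly the paper's.
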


\begin{proof}
We show Theorem~\ref{thm:ov-weighted} by reduction from \SpecialPartition{} once more. Let $I_P$ be an instance of \SpecialPartition{} with $\mathcal{A}= \{a_1, \dots, a_{\ell}\}$ and $\sum_{a\in \mathcal{A}} a = 2\summe$. The instance $I_B$ of \textsc{Weighted-$(OV,DV,C)$-Bribery} that is constructed from $I_P$ has two issues~$X$ and~$Y$ with domains~$\{x,\overline{x}\}$ and ~$\{y,\overline{y}\}$, respectively, and therefore four candidates. For the bribery action $DV$ these are the following:

\begin{description}
\item[$xy$] The preferred candidate $p$ of the briber. There is one voter with weight~$\summe$ casting his veto against him.
\item[$\overline{x}y$] The candidate $c_2$, who initially receives $2\summe$ vetos.
\item[$\overline{x}\overline{y}$] The candidate $c_3$, against whom no one casts a veto initially, but who will win together with $p$ and $c_2$ if there exists a partition.
\item[$x\overline{y}$] An unimportant clone~$u$ of $p$.
\end{description}

\noindent We achieve this by constructing the following voters:

\begin{center}
\begin{tabular}{ccccc}
\toprule 
voter~ & top candidate~ & dependency~ & weight~ & vetoes agains\\
\cmidrule{1-5}
$v_1$ & $\overline{x}\overline{y}$ & none & $\summe$ & $p$  \\
$v_2$ & $\overline{x}y$ & none & $\summe$ & $u$\\ 
$v_3$ & $xy$ & $X\rightarrow Y$ & $a_1$ & $c_2$ \\ 
$v_4$ & $xy$ & $X\rightarrow Y$ & $a_2$ & $c_2$ \\ 
\multicolumn{5}{c}{\vdots}\\ 
$v_{\ell+2}$ & $xy$ & $X\rightarrow Y$ & $a_{\ell}$ & $c_2$ \\
\bottomrule
\end{tabular}
\end{center}

\noindent Note that the briber is only allowed to bribe the dependent issue~$Y$ for every voter except $v_1,\, v_2$, so he can only transfer vetos from $c_2$ to $c_3$. The best he can do is to split the vetos of $c_2$ among $c_2$ and $c_3$, in which case they win together with $p$ and $u$. In every other case $p$ loses. As in the proof of Theorem~\ref{thm:weighted-op}, the two instances are equivalent. As no costs are involved here, the reduction works for every cost scheme considered, assuming an unlimited budget.
\end{proof}
We now investigate the weighted case for the voting rule $SM$. We found that the situation here has to be analyzed in more detail: Hardness of the weighted version of the bribery problem can be caused just by the use of the cost vector. \\
Mattei \textit{et al.}~\cite[Theorem 6]{mattei2013bribery} showed that \textsc{Weighted-$(SM,A,C)$-Bribery} is \NP-complete for all bribery actions and for all cost schemes they considered. 
However, for the cost schemes $C\in\{\Clevel,\Cdist,\Cflip,\Cany\}$, their reduction implicitly makes use of the cost vector ${\bf Q}$. Hence, the computational hardness originates from the individual costs for each voter. \\
This is not the case for $\Cequal$, where \NP-completeness already follows from the result in the unweighted case, without any requirement on individual costs.\\
Mattei \textit{et al.}~observe that for the cost scheme $\Cflip$, if the cost vector ${\bf Q}$ contains only ones, \textsc{Weighted-$(SM,A,\Cflip)$-Bribery} is in \Pclass{} with bribery action $A\in\{IV,DV,IV+DV\}$ (\cite[Theorem~7]{mattei2013bribery}).
We found the same property for \textsc{Weighted-$(SM,A,\Clevel)$-Bribery}, which is also solvable in polynomial time for all bribery actions considered in this work if the cost vector contains only ones.

\begin{theorem}\label{thm:weighted-sm-clevel}
\textsc{Weighted-$(SM,A,\Clevel)$-Bribery} with bribery action $A\in\{IV,DV,IV+DV\}$ is, for the special case that $Q[i]=1$ for each voter $v_i$, solvable in time $\bigO{n^2 m^2}$,  .
\end{theorem}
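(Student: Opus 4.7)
The plan is to exploit the sequential nature of $SM$ to decompose the global bribery problem into $m$ independent single-issue subproblems, and to solve each subproblem as a min-cost Knapsack instance that is tractable because the relevant costs are small.

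First, I would observe that because the profile is $\mathcal{O}$-legal, $SM$ decides the issues one by one along $\mathcal{O}$, and at the moment $X_j$ is being voted on, every parent of $X_j$ has already received a fixed value. Hence each voter's ballot at $X_j$ is governed by exactly one active cp-statement of that voter. Moreover, the preferred candidate $p$ fixes a desired value at $X_j$, and the briber's goal reduces to ensuring that, after bribery, the total weight in favour of $p$'s value at $X_j$ meets the (co-)winner threshold. Since bribing the cp-statement of issue $X_j$ in voter $v_i$'s CP-net does not affect the active cp-statement for any other issue $X_{j'}$, and does not alter the dependency graphs (and hence not $\mathrm{level}_i(\cdot)$), the optimal cost over the whole instance is just the sum over $j$ of the optimal costs of the single-issue subproblems.

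Second, under $\Clevel$ with $Q[i]=1$, flipping voter $v_i$'s active cp-statement at $X_j$ costs exactly $c_{i,j}:=k_i+1-\mathrm{level}_i(X_j)$, which is an integer in $\{1,\dots,k_i\}\subseteq\{1,\dots,m\}$ and depends only on $(i,j)$ and not on the bribery decisions at other issues. In particular the total cost incurred at any single issue is bounded by $nm$. All the $c_{i,j}$ can be precomputed in $\bigO{nm}$ time by a topological pass through each CP-net.

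Third, I would recast the single-issue subproblem for $X_j$ as a min-cost Knapsack instance: among the voters currently voting against $p$'s preferred value at $X_j$, pick a subset $S$ to bribe so that $\sum_{i\in S} w_i$ covers the weight deficit needed to flip the majority, while minimising $\sum_{i\in S} c_{i,j}$. Using the standard cost-indexed dynamic program $f(i,c) = $ maximum vote weight achievable by bribing a subset of the first $i$ candidate voters with total cost exactly $c$, and scanning $c$ over $\{0,1,\dots,nm\}$, the DP table has size $\bigO{n\cdot nm}$ and each entry is filled in constant time, giving $\bigO{n^{2}m}$ per issue. The answer for that issue is the smallest $c$ with $f(n,c)$ at least the deficit. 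Summing over the $m$ issues, and noting that the bribery action $A\in\{IV,DV,IV+DV\}$ only restricts which voters are candidates for bribery at a given issue (and is handled by discarding the non-eligible voters from the per-issue Knapsack), gives the claimed running time of $\bigO{n^{2}m^{2}}$.

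The part I expect to require the most care is the decoupling step: I want to be explicit that a per-issue optimum is genuinely achievable independently of the other issues. This hinges on three facts that I would spell out: (i) the active cp-statement at $X_j$ is determined solely by the previously fixed parent values, which are the same regardless of later bribery; (ii) flipping a cp-statement at $X_j$ has no effect on the cost or on the active cp-statement of $X_{j'}$ for $j'\neq j$; and (iii) because $\mathrm{level}_i$ is a function of the dependency graph alone, it is invariant under any bribery allowed by the model. Once these invariances are noted, the reduction to a sequence of $m$ independent min-cost Knapsack instances is clean, the pseudo-polynomial Knapsack bound is genuinely polynomial here because costs are $\bigO{m}$, and the overall $\bigO{n^{2}m^{2}}$ bound follows.
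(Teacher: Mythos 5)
Your proposal is correct and takes essentially the same approach as the paper's proof: decompose the problem issue by issue using the sequential structure of $SM$, solve each single-issue subproblem as a Knapsack instance via a cost-indexed dynamic program of size $\bigO{n\cdot nm}$, and use the fact that $\Clevel$ costs are bounded by $m$ to turn the pseudo-polynomial Knapsack bound into a genuine $\bigO{n^2m^2}$ bound. The only cosmetic difference is that you minimize the cost of the bribed set directly, whereas the paper equivalently maximizes the cost of the complementary un-bribed set subject to its weight staying below $\lfloor W/2\rfloor$; the handling of $IV$/$DV$ by restricting the eligible voters is likewise identical.
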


\begin{proof}
For the voting system $SM$, it is sufficient to look at the issues one after another, since they do not affect each other, even with dependencies. For each issue, a certain amount of weighted voters has to be bribed to flip the value of the corresponding issue. The briber just needs to know which of the voters are the cheapest ones to bribe. We first start with bribery action $IV+DV$.

Let $V$ be the set of all voters, let~$w_i$ be the weight of voter~$v_i$, $W=\sum_{i=1}^n w_i$ the sum of all weights, and $\text{costs}(v_i)$ the cost to bribe voter~$v_i$ in the considered issue. The set of all voters who vote in the given issue for the same value that is taken by~$p$ in this issue is called~$\mathcal{G}$ for the \emph{good} voters, and $\mathcal{B} = V\setminus \mathcal{G}$ denotes the set of the remaining \emph{beneficiary} voters. The briber is interested in finding a subset $\mathcal{S'}\subseteq \mathcal{B}$, such that $\sum_{v_i \in \mathcal{S'}\cup \mathcal{G}} w_i > \lfloor\frac{W}{2}\rfloor$ and $\sum_{v_i\in \mathcal{S'}}\text{costs}(v_i)$ is minimal. This is the same as finding a subset $\mathcal{S}\subseteq \mathcal{B}$, such that $\sum_{v_i\in\mathcal{S}} w_i \leq \lfloor \frac{W}{2}\rfloor$ and $\sum_{v_i\in \mathcal{S}}\text{costs}(v_i)$ is maximal. 
The latter problem is the same as solving the \textsc{Knapsack} problem on the set $\mathcal{B}$ where the bribery costs $\text{costs}(v_i)$ correspond to the values, and the weights of the voters correspond to the size of the items to be packed in the knapsack, respectively. The crucial observation here is that with $\Clevel$, the cost for flipping the value of the relevant cp-statement is bounded by the number~$m$ of issues, as there cannot be more levels than issues by definition, hence $\text{costs}(v_i)\leq m$ for all $v_i \in V$. As mentioned in Section~\ref{sec:prel}, one can solve the \textsc{Knapsack} problem with dynamic programming in time $\bigO{n \cdot T}$, where $T$ is the sum of the values of all objects. In our case, $T$ is the sum $\sum_{i=1}^n \text{costs}(v_i)$, hence bounded by $n\cdot m$, so that the running time of the dynamic programming algorithm proposed by Dantzig (see~\cite{dantzig1957discrete}) is polynomial. In detail, we implement Dantzig's solution as follows:

We define an $n \times \sum_{v_i \in \mathcal{B}} \text{costs}(v_i)$ matrix $D$ with entries $d_{i,j}$ being the minimum weight of all subsets $\mathcal{S}\subseteq \{v_1,\dots,v_i\}$ with $\sum_{v_i\in \mathcal{S}} \text{costs}(v_i)=j$. Setting $d_{i,0}=0$ for all $i$, $d_{-1,j}=\infty$ for $j> 0$, and $d_{i,j}=\infty$ for  $j<0$, we can compute all entries of $D$ recursively by  
\begin{align*}
d_{i,j} = \min \begin{cases}
 d_{i-1,j},\\
 d_{i-1,j-\text{costs}(v_i)}+w_j,
\end{cases}
\end{align*}
with $w_i$ and $\text{costs}(v_i)$ being the weight of voter $v_i$ and costs to bribe him in the considered issue, respectively. 
The optimal solution can then be found as the entry $d_{\lvert\mathcal{B}\rvert,j}$ with maximum~$j$ 
such that $d_{\lvert\mathcal{B}\rvert,j} \leq \lfloor \frac{W}{2} \rfloor$. 
 The set of voters (which we do not bribe) corresponding to this value can be found by backtracking.

For the considered issue, we hence have a matrix with $\bigO{n\cdot nm}$ entries, each of which can be calculated in $\bigO{1}$ time. This has do be done for each issue, so the algorithm has an overall running time of $\bigO{n^2m^2}$.

With bribery actions $IV$ and $DV$, the briber is not able to bribe every voter for specific issues. Therefore we partition the set of \emph{beneficiary} voters~$\mathcal{B}$ in the set of \emph{unbribable} voters $\mathcal{B}_U$ and the set of \emph{bribable} voters $\mathcal{B}_B$. We are then searching for a set $\mathcal{S} \subseteq \mathcal{B}_B$ with $\sum_{v_i \in \mathcal{S}} \text{costs}(v_i)$ maximal such that $\sum_{v_i\in \mathcal{S}} w_i \leq \lfloor \frac{W}{2} \rfloor - \sum_{v_j \in \mathcal{B}_U} w_j$. Such a set can be obtained with the same algorithm as described above. 
\end{proof}

\subsection{Results for the unweighted negative case}

In this section, we analyze the complexity of the unweighted negative bribery problem in CP-nets. 
For the voting system $SM$ combined with cost scheme $\Cequal$, we can easily adapt the proof of the unweighted positive case and obtain \NP-completeness as well. Since we use a (parameterized) reduction from \textsc{Negative Optimal Lobbying}, this also means that the problem is \Wzwei-hard with respect to the budget, cf.\ the remark in Section~\ref{subsection:Bribery}.

\begin{theorem}\label{thm:unweighted-sm-cequal-negative}
\textsc{$(SM,A,\Cequal)$-negative-Bribery} with bribery action $A\in\{IV,DV,IV+DV\}$ is \NP-complete{}.
\end{theorem}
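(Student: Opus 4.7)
The plan is to adapt Mattei et al.'s NP-hardness proof for the positive unweighted $(SM,A,\Cequal)$-bribery, replacing their reduction from \textsc{Optimal Lobbying} by a reduction from \textsc{Negative Optimal Lobbying} (Theorem~\ref{thm:optimal-lobbying-negative}). The substitution is natural: the constraint ``the chosen rows must not be edited to~$x$'' in \textsc{Negative Optimal Lobbying} matches precisely the constraint ``bribed voters must not be induced to vote for the preferred candidate~$p$'' in the negative bribery variant. Membership in \NP{} is immediate because under $\Cequal$ a bribery certificate consists of at most $\beta\le n$ voters together with replacement CP-nets of polynomial size, and a verifier can simulate SM and check the negative constraint in polynomial time.

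For hardness, given an instance of \textsc{Negative Optimal Lobbying} with an $n\times m$ matrix $\mathcal{E}$, target vector $x$, and integer $k$, I would introduce $m$ binary issues $X_1,\dots,X_m$ and $n$ voters $v_1,\dots,v_n$, setting the preferred candidate $p=x$, the cost vector to all ones, and the budget $\beta=k$. For bribery actions $A\in\{IV,IV+DV\}$ every CP-net is fully independent with cp-statement $\mathcal{E}[i][j] > 1-\mathcal{E}[i][j]$ for issue $X_j$, and the linearization $\mathcal{O}=X_1>\dots>X_m$ makes the profile trivially $\mathcal{O}$-legal. For $A=DV$ I would prepend a root issue $Z$ with domain $\{z,\bar z\}$, equip every voter with the uniform cp-statement $z>\bar z$, and make each $X_j$ depend on~$Z$; the cp-statement of~$X_j$ conditional on~$z$ then encodes $\mathcal{E}[i][j]$, while the cp-statement conditional on~$\bar z$ is arbitrary, and the order $\mathcal{O}=Z>X_1>\dots>X_m$ is used. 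Because $z$ wins the first SM step unanimously, only the conditional-on-$z$ cp-statements are ever consulted, so $DV$-bribery reproduces the full flipping power of the $IV$ case. To avoid ambiguity between strict and non-strict majority, I would assume that $n$ is odd, padding with one voter whose top candidate is~$p$ if needed; that voter is never bribed and does not alter the \textsc{Negative Optimal Lobbying} answer.

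The correctness equivalence is then direct. From a yes-solution $R\subseteq\{1,\dots,n\}$ with $|R|\le k$ and edited rows $r_i'\ne x$, one bribes precisely the voters $\{v_i : i\in R\}$ at unit cost and replaces each $v_i$'s CP-net so that his new top candidate is~$r_i'$; the negative constraint is respected since $r_i'\ne x=p$, and the column majorities in the modified matrix realize $p$ as the SM output. Conversely, in any successful bribery the collection of new top candidates of the bribed voters, together with the unchanged rows of $\mathcal{E}$, yields a matrix whose column-wise majorities equal~$x$, while the negative constraint forces each changed row to differ from~$x$; this is a yes-solution to \textsc{Negative Optimal Lobbying} of size at most~$k$. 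I expect the only technical subtlety to lie in the $DV$ case, where one must ensure that the dummy root~$Z$ neither interferes with SM nor with the negative constraint; the unanimous preference $z>\bar z$ settles both concerns at once. Finally, since the reduction preserves $k$ as the budget $\beta$, the \Wzwei-hardness asserted in Theorem~\ref{thm:optimal-lobbying-negative} transfers to $(SM,A,\Cequal)$-Negative-Bribery with respect to the budget.
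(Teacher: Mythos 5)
Your proposal takes essentially the same route as the paper: the paper's proof is a one-line modification of Mattei~\textit{et al.}'s Theorem~5, replacing the reduction from \textsc{Optimal Lobbying} by one from \textsc{Negative Optimal Lobbying} (our Theorem~\ref{thm:optimal-lobbying-negative}), which is precisely your plan. You merely spell out the reduction details (issues as columns, voters as rows, the dummy root issue for the $DV$ action, budget $\beta=k$) that the paper leaves implicit by reference, and these details are correct.
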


\begin{proof}
Theorem~\ref{thm:unweighted-sm-cequal-negative} can be shown by modifying the proof for $(SM,A,\Cequal)$-Bribery being \NP-complete~\cite[Theorem 5]{mattei2013bribery}. We just need to exchange the term `\textsc{Optimal Lobbying}' by `\textsc{Negative Optimal Lobbying}'.
\end{proof}

\begin{theorem}\label{thm:unweighted-sm-negative}
\textsc{$(SM,A,C)$-negative-Bribery} is in \Pclass{} for a bribery action $A\in\{IV,DV,IV+DV\}$ and a cost scheme $C\in \{\Cany,\Clevel,\Cflip\}$.
\end{theorem}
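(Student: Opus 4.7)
My plan is to extend Mattei et al.'s polynomial-time algorithm for \textsc{$(SM,A,C)$-Bribery} with $C \in \{\Cany, \Clevel, \Cflip\}$ by encoding the negative-bribery restriction as per-voter capacity constraints in a minimum-cost flow network. Because $SM$ aggregates issue by issue along the fixed order~$\mathcal{O}$, and because any reasonable bribery flips at most one cp-statement per issue per voter (cf.\ the remark in Section~\ref{subsection:Bribery}), the cost of bribing a voter~$v$ on any single issue~$X_j$ is polynomial-time computable under each of the three cost schemes. Assuming $p$ wins every issue (a necessary condition for the bribery to succeed), I can precompute, for each $v$ and each $X_j$, the ``target-scenario vote'' of~$v$ on~$X_j$---the value dictated by the cp-statement of~$v$ for~$X_j$ conditional on $p$'s values on the parents of~$X_j$---and thus compute the deficit $d_j$ of voters that must be flipped to make $x_p$ (the value of $p$ on $X_j$) the majority on $X_j$.

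Next, I identify a ``safe budget'' for each voter~$v$ that prevents the bribed CP-net from producing $p$ as top. Using Boutilier et al.'s linear-time topological traversal, I compute $v$'s top $t_v$ and the set $D_v$ of issues on which $t_v$ disagrees with $p$; any $v$ with $t_v = p$ is excluded from bribery, and for $t_v \ne p$ the bribed top can become $p$ only if the flips force every coordinate of $D_v$ to the $p$-value in the topological top-computation. I then construct a min-cost flow network: a source $s$, sink $t$, a voter-node per voter with outgoing capacity encoding $v$'s safe budget, an issue-node per deficit-positive issue with edge $(X_j,t)$ of capacity~$d_j$, and a unit-capacity edge $(v, X_j)$ for each feasible single-issue bribe, whose cost equals the scheme-dependent price of that flip. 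The bribery instance is a yes-instance iff a feasible flow of value $\sum_j d_j$ has cost at most~$\beta$; integrality of min-cost flow on integer-capacity networks together with its polynomial-time solvability (Ahuja et al.~\cite{ahuja1993network}) yields the claimed running time.

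The main technical obstacle will be capturing the negative constraint accurately in the voter capacities. In~$SM$, the briber pays to flip the ``vote-active'' cp-statement---the one conditional on the majority-voted (hence, in the successful-bribery scenario, $p$-valued) parent values of $X_j$---whereas $v$'s top is governed by the ``top-active'' cp-statement conditional on $v$'s own top parent values, and these two can disagree. The resolution is to classify each (voter, issue) bribe as top-relevant or top-irrelevant and to charge only the top-relevant ones against the $|D_v|-1$ bound, which I implement by splitting each voter-node into a capacity-limited ``top-relevant'' sub-node and an unrestricted ``top-irrelevant'' sub-node. This classification is polynomial by $\mathcal{O}$-legality and compactness of the CP-nets, and once it is in place the three cost schemes $\Cany$, $\Clevel$, $\Cflip$ all plug into the same flow framework since each assigns polynomial-time-computable non-negative integer costs to individual flips.
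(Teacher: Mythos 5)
There is a genuine gap: your network only ever models flips \emph{toward} $p$'s value, and you enforce the negative constraint solely by capping the number of such flips per voter at $\lvert D_v\rvert-1$. This makes the algorithm incomplete. Under negative bribery the briber may also pay a voter to flip \emph{away} from $p$ on some issue, precisely so that this voter does not end up voting for $p$, and this is sometimes the only way to cover the deficits. Concretely, take three voters and two independent issues with $p=x_1x_2$, where $v_1$'s top is $x_1x_2$ and $v_2,v_3$ both have top $\overline{x}_1x_2$. Then $D_{v_2}=D_{v_3}=\{X_1\}$, the deficit on $X_1$ is $d_1=1$, and $X_2$ has a $3$-to-$0$ surplus. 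Your voter capacities are $\lvert D_v\rvert-1=0$, so no flow of value $1$ exists and you reject; yet the instance is a yes-instance (given sufficient budget): bribe $v_2$ toward $x_1$ on $X_1$ and additionally bribe $v_2$ to $\overline{x}_2$ on $X_2$, where $p$ still wins $2$-to-$1$, so $v_2$ votes $x_1\overline{x}_2\neq p$. The paper's proof handles exactly this with its ``free issue'' nodes $X^{\text{free}}_l$ (surplus issues absorb an away-flip at cost $\costs_i(X_l)$) and its ``expensive issue'' nodes $X^j_l$ (when there is no surplus, the away-flip of a $p$-voter on $X_l$ is paired with a compensating toward-$p$ flip of a donor voter $v_j$ on the same issue, the edge cost being $\costs_j(X_l)\pm\costs_i(X_l)$ depending on whether $v_i$ had been bribed there). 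Your network contains no gadget that can express either option, and adding one is not cosmetic: an away-flip on $X_l$ must \emph{reduce} $p$'s support on $X_l$ rather than feed the deficit edge $(X_l,t)$, so it cannot be encoded as just another $(v,X_l)$ edge in your construction.

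A secondary, architectural difference: the paper works in two phases---first run Mattei et al.'s polynomial algorithm for the positive problem, then ``repair'' its solution with the flow network sketched above---whereas you build a single network directly from the per-issue deficits. The single-phase route is not wrong in principle, and your distinction between the vote-active cp-statement (conditional on $p$'s parent values, which is what governs a voter's behaviour under $SM$) and the top-active one is a point the paper glosses over; but without a mechanism for away-flips and for the surplus/donor trade-off, neither the feasibility test nor the optimality claim of your reduction goes through.
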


\begin{figure}[h!]
\centering
\begin{tikzpicture} [->,thick,align=center,xscale=2,yscale=0.6]
  \path[every node/.style={circle,minimum size=0.4cm}]
  	   (0, -1)  	node (s)  {$s$}
       (1, -2.75)	node (l1v1) {$v_3$}
       (1, -1)   	node (l1v2) {$v_2$}
       (1,0.75)	node (l1v3) {$v_1$}
       (2,-4.75) 	node (l2f1) {$X_2^\text{free}$}   
       (2,-2.25)	node (l2o1) {$X_3^5$}  
       (2,-1.25)	node (l2o2) {$X_2^5$} 
       (2,-0.25) 	node (l2o3) {$X_1^5$}     
       (2,1.25)	node (l2u1) {$X_2^4$}
       (2,2.25)	node (l2u2) {$X_1^4$}
       (3,-1) 	node (l3c1) {$v_5$}
       (3,0.5)	node (l3c2) {$v_4$}
       (4,-1) 	node (t) {$t$}
       (1,3.6) 	node (a) {$P$}
       (2,-6.3) 	node (f) {$F$}
       (3,3.6) 	node (b2) {$D$}
       (2,3.6) 	node (x) {$E$};
       
  \draw[dashed,thin,gray] (2,-4.75) ellipse (.25cm and 1.0cm);
  \draw[dashed,thin,gray] (2,0) ellipse (.3cm and 3cm);
  \path[every node/.style={font=\sffamily\small}]
    (s)		edge node[above] {} (l1v1)
    			edge node[above] {} (l1v2)
    			edge node[above] {} (l1v3)
    	(l1v1)	edge[dotted] node[above] {} (l2o1)
    			edge[dotted] node[above] {} (l2o2)
    			edge[dotted] node[above] {} (l2o3)
    			edge[dotted] node[above] {} (l2u1)
    			edge[dotted] node[above] {} (l2u2)
    			edge[dotted] node[above] {} (l2f1)	
    	(l1v2)	edge[dotted] node[above] {} (l2o1)
    			edge[dotted] node[above] {} (l2o2)
    			edge[dotted] node[above] {} (l2o3)
    			edge[dotted] node[above] {} (l2u1)
    			edge[dotted] node[above] {} (l2u2)
    			edge[dotted] node[above] {} (l2f1)
    	(l1v3)	edge[dotted] node[above] {} (l2o1)
    			edge[dotted] node[above] {} (l2o2)
    			edge[dotted] node[above] {} (l2o3)
    			edge[dotted] node[above] {} (l2u1)
    			edge[dotted] node[above] {} (l2u2)
    			edge[dotted] node[above] {} (l2f1)
    	(l2o1)	edge node[above] {} (l3c1)
    	(l2o2)	edge node[above] {} (l3c1)
    	(l2o3)	edge node[above] {} (l3c1)
    	(l2u1)	edge node[above] {} (l3c2)
    	(l2u2)	edge node[above] {} (l3c2)
    	(l2f1)	edge[dashed] node[above,pos=0.3] {$1$} (t)
    	(l3c1)	edge[dashed] node[above,pos=0.3] {$2$} (t)
    	(l3c2)	edge[dashed] node[above,pos=0.3] {$1$} (t);
   
\end{tikzpicture}
\caption{Example for the flow network of step 2 in the proof of Theorem~\ref{thm:unweighted-sm-negative} to adjust a solution for \textsc{$(SM,A,C)$-Bribery} to the negative version. In the solution to the non-negative version, the voters $v_1,v_2$, and $v_3$ are voting for $p=111$, $v_4$ for $010$, and $v_5$ for $000$. All edges have costs $0$ and capacity $1$. The only two exceptions are the dashed ones going to target $t$, and the dotted ones between $P$ and $E$. The former ones have a different capacity stated, and the latter ones have different costs. The costs of the edges $(v_i,X^j_l)$ and $(v_i,X^\text{free}_l)$ depend on whether $v_i$ was bribed in issue $X_l$.}
\label{fig:unweighted-sm-negative}
\end{figure}
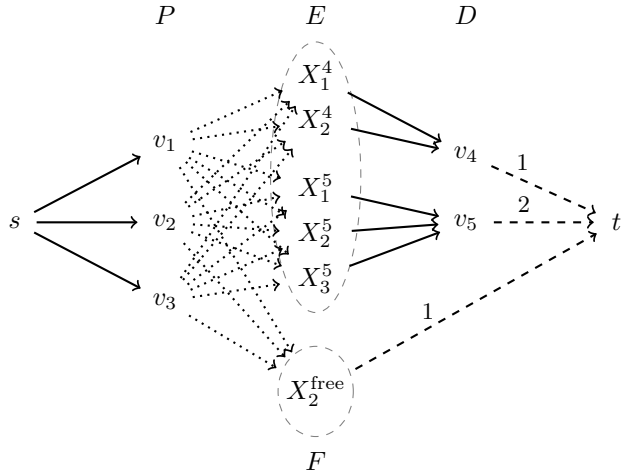

\begin{proof}
We construct the solution in two steps. First, we apply the polynomial time algorithm given by Mattei \textit{et al.}~\cite[Theorem 4]{mattei2013bribery} to solve an $(SM,A,C)$-Bribery instance. Here, a greedy strategy helps to choose which voter is to bribe in which issue. Unfortunately, the solution may contain some voters who are bribed to vote directly for $p$, therefore it does not solve \textsc{$(SM,A,C)$-negative-Bribery} directly. In the second step we \emph{repair} the first solution by choosing the cheapest flips such that no voter votes for $p$. This is done with a flow network. As the number of nodes in this network polynomially depends on the number of voters and issues, the flow problem on the network can be solved in polynomial time, too. 

\noindent An example is given in Figure~\ref{fig:unweighted-sm-negative}.

We take a source $s$ and a target $t$ for the network and construct the following sets of vertices and edges. All edges have costs $0$ and capacity $1$, unless specified otherwise.

\begin{description}
\item[P(-Voters):] For every voter $v_i$ voting for $p$ we create a node $v_i$ and an edge $(s,v_i)$.
\item[D(onor Voters):] For each voter $v_j$ not voting for $p$ we create a node $v_j$ and an edge $(v_j,t)$. Let~$d$ be the number of issues in which $v_j$'s top candidate differs from~$p$. Then the capacity of the edge $(v_j,t)$ is set to $d-1$. This ensures that $v_j$ cannot vote for $p$.
\item[F(ree Issues):] For each issue $X_l$, in which $k > \lfloor\frac{n}{2}\rfloor+1$ voters vote for $p$, we create a node $X^\text{free}_l$ and an edge $(X^\text{free}_l,t)$, with capacity $k-(\lfloor\frac{n}{2}\rfloor+1)$ . Additionally, we create an edge $(v_i, X^\text{free}_l)$ for each $v_i$ created in $P$ with costs $\costs_i(X_l)$. Flow on these edges indicates which voter has to be bribed in which issue.
\item[E(xpensive Issues):] For each node $v_j$ in $D$ we create a node $X^j_l$ if voter $v_j$ is not voting for $p$ in issue $X_l$. For each of these nodes we create an edge $(X^j_l,v_j)$. Additionally, we create all possible edges between each node $v_i$ in $P$ and each node $X^j_l$ in $E$. Those edges get different costs depending on whether $v_i$ was bribed in issue $X_l$. If so, they get the cost $\costs_j(X_l)-\costs_i(X_l)$, and $\costs_j(X_l)+\costs_i(X_l)$, otherwise. Again, flow on these edges indicates which voter has to be bribed in which issue.
\end{description}

As in the proof of Theorem~\ref{thm:OP_Cany}, one can show that there exists a flow of value~$n$ if and only if the corresponding bribery problem can be solved. The idea here is that at least one issue has to be flipped for each voter voting for $p$. This can be done easily if more than $\lfloor\frac{n}{2}\rfloor+1$ voters are voting for $p$ in this issue. Otherwise the same issue has to be flipped in the vote of a voter who is not voting for $p$. Additionally, one has to prohibit those voters to vote for $p$ in the end. This is ensured by the capacities on the edges to the target $t$. This way, the flow algorithm can find the cheapest way to \emph{repair} the solution of the first step, which leads to an optimal solution.
\end{proof}

For $OP$, we can adapt the proof of the non-negative case (Theorem~\ref{thm:OP_Cany}) and obtain the following theorem.

\begin{theorem}\label{thm:unweighted-op-negative}
\textsc{$(OP,A,C)$-negative-Bribery} is in \Pclass{} with bribery action $A\in\{IV,DV,IV+DV\}$ and a cost scheme $C\in \{\Cequal,\Cflip,\Clevel,\Cany,\Cdist\}$.
\end{theorem}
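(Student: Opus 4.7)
The plan is to adapt the minimum-cost flow construction from the proof of Theorem~\ref{thm:OP_Cany} to the negative setting. The only conceptual change imposed by the negative constraint is that a voter whose current top candidate differs from~$p$ must not be bribed into voting for~$p$; every other ingredient of the construction (the layers $V$, $B$, $C'$, the gadget node~$g$, the $n$ trial networks parameterized by the intended final score $r\in\{1,\ldots,n\}$ of~$p$, and the min-cost flow of value~$n$) is preserved.

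First, I would partition the voters into~$V_p$, those whose current top candidate is~$p$, and $V_{\overline p}$, the remaining voters. Around a voter $v_i\in V_p$ I would build the bribery layer~$B$ exactly as in Theorem~\ref{thm:OP_Cany}: one node $c_j^i$ per candidate among the $n$ cheapest candidates for~$v_i$, together with $c_1^i$ for $p$ itself (the edge $(v_i,c_1^i)$ has cost~$0$, reflecting that $v_i$ need not be bribed at all). Around $v_i\in V_{\overline p}$ the node $c_1^i$ is instead omitted and the layer contains only the $n$ cheapest candidates for~$v_i$ \emph{different from $p$}. These sets are again produced by the \summenproblem{} algorithm of Theorem~\ref{thm:sums} applied to the set of flips permitted by the respective bribery action $A\in\{IV, DV, IV+DV\}$, with the trivial post-processing of discarding, for voters in $V_{\overline p}$, the subset whose induced candidate is~$p$.

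The remainder of the network is identical to the positive case: one collection node $c_j^\ast$ per candidate occurring in some $B$-layer, edges from each $c_j^\ast$ with $j\neq 1$ to the gadget~$g$ of capacity~$r$, the edge $(c_1^\ast,t)$ of capacity~$r$, and the edge $(g,t)$ of capacity~$n-r$. I would solve min-cost flow of value~$n$ in each of the $n$ networks (one per possible final score~$r$ of~$p$) and keep the globally cheapest successful bribery, rejecting only if every such cost exceeds~$\beta$. Polynomial size of each network together with the polynomial-time min-cost flow subroutine of~\cite{ahuja1993network} yields the claimed tractability.

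The correctness argument mirrors that of Theorem~\ref{thm:OP_Cany} almost verbatim. In the forward direction, removal of $c_1^i$ for every $v_i\in V_{\overline p}$ prevents any unit of flow from such a voter from ever reaching $c_1^\ast$, so the induced bribery is legal in the negative setting. In the converse direction, any legal negative bribery in which $p$ scores~$r$ votes is encoded in the $r$-th network via the usual overpaid-voter reroute along augmenting paths. The one step I expect to require genuine care is rechecking that the pigeonhole argument underlying the ``$n$ cheapest candidates suffice'' claim from Theorem~\ref{thm:OP_Cany} survives the deletion of one candidate per voter of~$V_{\overline p}$; this should go through because each $B$-layer still offers $n$ admissible targets and the $n-1$ remaining voters cannot saturate all of them at score~$r$. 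Because the cost schemes $\Cequal,\Cflip,\Clevel,\Cany,\Cdist$ only affect the costs on edges $(v_i,c_j^i)$, and the bribery actions $IV$, $DV$, $IV+DV$ only restrict the set of admissible flips fed into \summenproblem{}, the same construction handles all five cost schemes and all three bribery actions uniformly.
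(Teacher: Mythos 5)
Your construction is essentially the paper's: the paper likewise adapts the flow network of Theorem~\ref{thm:OP_Cany} by deleting the node $c_1^i$ and its incident edges for every voter whose top candidate is not $p$, and your observation that each $B$-layer should still offer $n$ admissible non-$p$ targets (so that the pigeonhole argument survives) is a point the paper treats more tersely, so that part of your write-up is if anything more careful.

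There is, however, one genuine gap: your claim that the \summenproblem{} machinery handles all five cost schemes uniformly fails for $\Cequal$ (and, strictly speaking, for $\Cdist$ as well). The \summenproblem{} algorithm selects the cheapest candidates by summing per-flip costs, but $\Cequal$ charges a single unit for \emph{any} nonzero amount of change, and this cost function admits no additive decomposition into flip costs — there is no assignment of sizes $s(a_i)$ under which the empty flip set costs $0$ while every nonempty flip set costs exactly $1$. So ``the $n$ cheapest candidates'' produced by \summenproblem{} under $\Cequal$ would come with incorrect edge costs. The paper patches this explicitly: under $\Cequal$ one simply picks $n+1$ arbitrary candidates $\neq p$ for each voter, together with the voter's top candidate, and sets the cost of the edge $(v_i,c_j^i)$ to $0$ if $c_j$ is $v_i$'s current top candidate and to $1$ otherwise. (For $\Cdist$ the costs are also not flip-additive, but there the $n$ cheapest candidates are just the first $n$ in the voter's linearized order, obtainable by binary increment, so the flow construction still goes through without \summenproblem{}.) With that adjustment for $\Cequal$ your proof is complete and coincides with the paper's.
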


\begin{proof}
We can almost adopt the proof for Theorem~\ref{thm:OP_Cany} here, we just have to change the definition of the set of nodes~$B$: Whenever $p$ is not the top candidate of a voter $v_i$, we do not create the node $c_1^i$ and its connecting edges. Then, a voter cannot be bribed to vote for $p$. This works for $\Cany,\Clevel,\Cflip$, and $\Cdist$. For $\Cequal$, we need additional changes. Since all changes have the same costs, we can choose $n+1$ arbitrary candidates ($\neq p$) for each voter to create the nodes in $B$ together with the one for the top candidate, instead of using only the cheapest ones as before. Finally, the costs for the edges $(v_i,c^i_j)$ connecting $V$ with $B$ are set to $0$ if $c_j$ is the top candidate of $v_i$, or $1$ otherwise.
\end{proof}

\noindent To prove the next theorem, we need Lemma~$1$ from Mattei~\textit{et al.}~\cite{mattei2013bribery} once more. 

\begin{theorem}\label{thm:unweighted-ok-negative}
\textsc{$(OK^*,A,C)$-negative-Bribery} is in \Pclass{} with bribery action $A\in\{IV,DV,IV+DV\}$ and cost scheme $C\in \{\Cequal,\Cflip,\Clevel,\Cany,\Cdist\}$.
\end{theorem}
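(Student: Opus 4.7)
The plan is to reduce \textsc{$(OK^*,A,C)$-negative-Bribery} to \textsc{$(OP,A,C)$-negative-Bribery} via Lemma~$1$ of Mattei~\textit{et al.}, in the same spirit as Corollary~\ref{cor:OK} extended Theorem~\ref{thm:OP_Cany} to $OK^*$, and then invoke Theorem~\ref{thm:unweighted-op-negative}. Fix the constant linearization $\mathcal{O}$ over the $m$ issues and write $j := \log_2 k$. By Lemma~$1$, the top $k=2^j$ outcomes of each voter agree with their top outcome on the first $m-j$ issues of $\mathcal{O}$ and range freely over the last $j$. Hence, two voters approve the same set of $k$ candidates if and only if they assign the same values to the first $m-j$ issues. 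Call this assignment the voter's \emph{prefix-top}, and let $p'$ be the prefix-top induced by the preferred candidate $p$.

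From the given instance I would build an instance of \textsc{$(OP,A,C)$-negative-Bribery} on the first $m-j$ issues: each voter's CP-net is truncated to this prefix (the restricted dependency graph is still acyclic because $\mathcal{O}$-legality forbids edges from later to earlier issues, and depths relevant to $\Clevel$ are preserved), the preferred candidate is $p'$, the bribery action $A$ is unchanged, and the budget is inherited. The key observation is that a voter approves $p$ in the original instance if and only if their prefix-top is $p'$ in the reduced instance. Consequently, the negative constraint ``no voter is bribed into approving $p$'' translates exactly to ``no voter is bribed into having $p'$ as their top'' in the reduced instance, which is the defining constraint of \textsc{$(OP,A,C)$-negative-Bribery}.

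For the cost schemes, I would argue as follows. For $\Cany$, $\Cflip$, and $\Clevel$, any flip affecting only the last $j$ issues is useless since it cannot alter any voter's approval set, so an optimal bribery touches only the first $m-j$ issues and inherits its cost directly; for $\Clevel$, levels in the truncated dependency graph coincide with those in the original since no first-$(m-j)$ issue has a parent among the last $j$ issues. For $\Cequal$, the unit cost is independent of the truncation. For $\Cdist$, the $k$ candidates sharing a given prefix form a contiguous block in $\mathcal{O}$, so the minimum $\Cdist$-cost of forcing some candidate with prefix $p'$ to be a voter's top equals $k$ times the reduced $OP$-$\Cdist$-cost of making $p'$ their top; scaling the budget (equivalently, the reduced costs) by the constant $k$ preserves the reduction in polynomial time.

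Applying the polynomial-time algorithm of Theorem~\ref{thm:unweighted-op-negative} to the reduced instance and lifting the bribery back by leaving the last $j$ issues untouched yields the desired algorithm. The only delicate point is the $\Cdist$ accounting, where the primitive cost object is a candidate rather than a flip; the contiguous-block structure imposed by the constant linearization makes this reduce to a clean factor-of-$k$ rescaling, which I expect to be the only technical hurdle in the proof.
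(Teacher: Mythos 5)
Your proposal is correct and follows essentially the same route as the paper: the paper's proof is a two-sentence appeal to Lemma~\ref{lemma:mattei1}, observing that only the first $m-j$ issues of the top candidate matter, and then invoking Theorem~\ref{thm:unweighted-op-negative} for the resulting \textsc{$(OP,A,C)$-negative-Bribery} instance. Your additional care with the cost schemes---in particular the factor-of-$k$ rescaling for $\Cdist$ coming from the contiguous-block structure of the approval sets---is a correct elaboration of a point the paper leaves implicit.
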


\begin{proof}
Due to Lemma~\ref{lemma:mattei1}~\cite{mattei2013bribery} we know that for $OK^*$, we only need to consider the top candidate instead of the first $k$ candidates, or, to be more precise, just the values of the first $m-j$ issues of the top candidate. So we just need to solve an instance of \textsc{$(OP,A,C)$-negative-Bribery} which is possible in polynomial time due to Theorem~\ref{thm:unweighted-op-negative}.
\end{proof}

Adapting the proof of Theorem~\ref{thm:OV_Cany}, we obtain the following theorem for the voting system~$OV$.

\begin{theorem}\label{thm:unweighted-ov-negative}
\textsc{$(OV,A,C)$-negative-Bribery} is in \Pclass{} with bribery action $A\in\{IV,DV,IV+DV\}$ and cost scheme $C\in \{\Cequal,\Cflip,\Clevel,\Cany,\Cdist\}$.
\end{theorem}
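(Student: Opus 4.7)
The plan is to adapt the flow-network construction from the proof of Theorem~\ref{thm:OV_Cany}. In the $OV$ setting a voter ``votes for'' $p$ precisely by \emph{approving} $p$, i.e., by not vetoing him. Hence negative bribery forbids exactly those bribery actions that turn a voter who currently vetoes $p$ into one who no longer vetoes $p$. Equivalently, no voter whose initial veto is on $p = c_1$ may be bribed; voters who already approve $p$ may be bribed arbitrarily (moving such a voter's veto onto $p$ is permitted but never helpful to the briber).

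I would keep the same case distinction as in Theorem~\ref{thm:OV_Cany}. For $n<2^m$, the pigeonhole principle gives some candidate with zero vetoes. Because $p$'s veto count cannot decrease under negative bribery, $p$ is a (co-)winner if and only if $p$ already has zero vetoes initially: otherwise $p$ keeps at least one veto while some candidate keeps zero. This condition is checked in linear time, independently of the cost scheme.

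For $n\geq 2^m$, I would reuse the flow network of the proof of Theorem~\ref{thm:OV_Cany} with one modification: for every voter $v_i$ whose initial veto is on $c_1 = p$, all ``bribery'' edges $(\hat{c}^i_j,c^i_1)$ are deleted. Flow conservation then forces the $2^m-1$ units of flow emanating from $v_i$ to travel exclusively through his zero-cost approval edges $(\hat{c}^i_j,c^i_j)$, so $v_i$'s veto remains on $p$, as required by the negative restriction. For voters initially vetoing some $c_x\neq p$ all original bribery edges are retained; in particular an optimal minimum-cost flow never uses the edge that would redirect a veto onto $p$. Iterating $r$ from $1$ to $n(2^m-1)$ and solving a min-cost flow on each of the resulting networks yields an optimal negative bribery whenever one exists, and the smallest such cost is compared with the budget $\beta$.

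The correctness proof mirrors that of Theorem~\ref{thm:OV_Cany}: a feasible flow of value $n(2^m-1)$ in the modified network corresponds exactly to a bribery respecting the negative-bribery restriction, because the deleted edges are in bijection with the forbidden bribery actions. The running time remains polynomial since $2^m\leq n$ in this case, the network has size polynomial in $n$ and $m$, and at most $\bigO{n^2}$ min-cost flow computations are performed. The only step requiring care is verifying that the edge deletion encodes precisely the restriction ``no voter may be induced to newly approve~$p$''; this is immediate from the semantics of the bribery edges of Theorem~\ref{thm:OV_Cany}, and this encoding works uniformly for all cost schemes $C\in\{\Cequal,\Cflip,\Clevel,\Cany,\Cdist\}$ since only edge costs, not network structure, depend on~$C$.
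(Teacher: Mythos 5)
Your proposal is correct and follows essentially the same route as the paper: the same case distinction on $n<2^m$ versus $n\geq 2^m$, the same observation that under negative bribery $p$ wins in the first case iff he initially has no veto, and the same modification of the Theorem~\ref{thm:OV_Cany} network (deleting the bribery edges into $c_1^i$ for voters vetoing $c_1$, which is equivalent to the paper's deletion of the node $c_1^i$ and its incident edges). No gaps.
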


\begin{proof}\let\oldqed\qedsymbol\renewcommand{\qedsymbol}{}
We can almost adapt the proof for Theorem~\ref{thm:OV_Cany}. We once more distinguish the two cases
\begin{description}
\item[$n<2^m$:] If $p$ got at least one veto, it is a No-instance. Otherwise it is a Yes-instance.
\item[$n\geq 2^m$:] In this case we create one flow network for each $r$ with $1\leq r \leq n\cdot(2^m-1)$ as described in the proof of Theorem~\ref{thm:OV_Cany}. However, this time, we do not create the node $c_1^i$ and its incident edges if a voter $v_i$ is casting his veto against $c_1$. The rest remains unchanged. \hfill\oldqed
\end{description}
\end{proof}

\subsection{Results for the weighted negative case}

In this subsection, we derive hardness results for all variants of the weighted negative case of the bribery problem, again by reductions from the \SpecialPartition{} problem following the main idea described in Subsection~\ref{subsec:weighted-pos}.\\

\noindent
We have shown in Theorem~\ref{thm:weighted-op} in Subsection~\ref{subsec:weighted-pos} that the weighted version of the bribery problem is \NP-complete for $OP$ for all cost schemes and bribery actions $IV$ and $DV$. In the given proof, the briber is not able to transfer any points to his preferred candidate~$p$ directly, hence this also solves the negative cases for bribery actions $IV$ and $DV$. For the negative case, we can extend this result to bribery action $IV+DV$ by adding a voter $v_{\ell+2}$ with weight~$\summe$ voting for $c_4$ without dependencies. Then $p$ can only win if the briber is once more able to split the votes for $c_1$ between $c_1$ and $c_2$. We hence obtain the following corollary for Theorem~\ref{thm:weighted-op}.

\begin{corollary}\label{cor:weighted_rest}
\textsc{Weighted-$(D,A,C)$-negative-Bribery} is \NP-complete for a voting rule $D\in\{OP,OK\}$, a bribery action $A\in\{IV,DV,IV+DV\}$, and a cost scheme $C\in \{\Cequal,\Cany, \Clevel, \Cdist, \Cflip\}$.
\end{corollary}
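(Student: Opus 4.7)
The plan is to lift the \SpecialPartition{} reduction from Theorem~\ref{thm:weighted-op} to the negative setting, and then to transfer the resulting hardness from $OP$ to $OK$ via the observation that $OP$ is the special case of $OK$ with $k=1$.

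For bribery actions $A\in\{IV,DV\}$, no change to the construction is required. In the reduction of Theorem~\ref{thm:weighted-op}, the CP-nets of the voters $v_2,\dots,v_{\ell+1}$ encoding the input numbers are fixed so that any $IV$-only bribery moves their top candidate from $c_2=\overline{x}y$ to $c_4=x\overline{y}$, while any $DV$-only bribery moves it to $c_3=\overline{x}\overline{y}$. In particular, no single-type bribery can install $p$ as top, so the existing reduction is already a reduction to \textsc{Weighted-$(OP,A,C)$-negative-Bribery}.

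The interesting case, and the main obstacle, is $A=IV+DV$: here a briber could otherwise send each $v_i$ directly to $p$ by simultaneously flipping the $X$ cp-statement and the $Y|x$ cp-statement, so the positive construction is not by itself a reduction to the negative problem. To block this, I would add a voter $v_{\ell+2}$ of weight $\summe$, no dependencies, and top candidate $c_4$, with cp-statements chosen so that flipping $Y$ alone yields top $p$ (forbidden), flipping $X$ alone yields top $c_3$, and flipping both yields top $c_2$. The initial scores of $p,c_2,c_3,c_4$ become $\summe,\,2\summe,\,0,\,\summe$. The key observation is that in the negative problem $p$'s score is frozen at $\summe$, and since $c_4$ already attains $\summe$ it cannot gain any weight; hence any successful bribery must shift exactly $\summe$ weight off $c_2$ onto destinations that can tolerate $\summe$ each.

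A short case analysis on what $v_{\ell+2}$ is bribed to then finishes the equivalence with \SpecialPartition: if $v_{\ell+2}$ is kept on $c_4$, the $\summe$ weight must all flow from $c_2$ to $c_3$ via $Y|\overline{x}$-flips; if $v_{\ell+2}$ is rerouted to $c_3$, then $c_3$ is already saturated so the $\summe$ weight must all flow from $c_2$ to $c_4$ via $X$-flips; rerouting $v_{\ell+2}$ to $c_2$ only makes $c_2$'s surplus worse and is never helpful. In every admissible scenario $p$ becomes a co-winner if and only if $\{a_1,\dots,a_\ell\}$ splits into two subsets of sum $\summe$. Because no costs are used in the construction, the equivalence is uniform over $C\in\{\Cequal,\Cany,\Clevel,\Cdist,\Cflip\}$; and since $OP$ is $OK$ with $k=1$, the same reduction yields hardness for $D=OK$. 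Membership in \NP{} is clear, as a candidate bribery can be verified in polynomial time.
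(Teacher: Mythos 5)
Your proposal is correct and follows essentially the same route as the paper: for $IV$ and $DV$ the reduction of Theorem~\ref{thm:weighted-op} already never sends a voter to $p$ and hence works verbatim in the negative setting, and for $IV+DV$ the paper adds exactly the same dependency-free voter $v_{\ell+2}$ of weight $\summe$ voting for $c_4$ to saturate that candidate and force the $2\summe$ weight on $c_2$ to be split as in \SpecialPartition{}, with the $OK$ case obtained by viewing $OP$ as $OK$ with $k=1$. Your explicit case analysis on where $v_{\ell+2}$ can be rerouted is a welcome elaboration of the paper's one-line justification, but it is the same argument.
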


To show \NP-completeness for the voting rule~$OV$, we can use the reduction given in the proof of Theorem~\ref{thm:ov-weighted} in Subsection~\ref{subsec:weighted-pos} for the positive version with bribery action $DV$, since in this proof, the briber never asks a voter to cast his veto from $p$ to another candidate. We will provide yet another reduction from \SpecialPartition{} to cover $IV$ and $IV+DV$ in the negative case, too. Summarizing, we obtain the following theorem.

\begin{theorem}\label{thm:weighted-OV-negative}
\textsc{Weighted-$(OV,A,C)$-negative-Bribery} with bribery action $A\in\{IV,DV,IV+DV\}$ and cost scheme $C\in \{\Cequal,\Cany, \Clevel, \Cdist, \Cflip\}$ is \NP-complete.
\end{theorem}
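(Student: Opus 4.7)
The plan is to prove NP-completeness by establishing membership and hardness separately. NP-membership is standard: a bribery scheme of cost at most $\beta$ can be guessed and its outcome verified in polynomial time. For hardness, I would give two reductions from \SpecialPartition{} that cover the three bribery actions in two groups, exactly as the excerpt hints.

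For $A = DV$, my proposal is to reuse the construction from the proof of Theorem~\ref{thm:ov-weighted}. The only voter in that reduction who initially vetoes $p = xy$ is $v_1$, and $v_1$ has an empty dependency graph, so no $DV$-flip can modify its cp-statements. All useful $DV$-flips occur on the cp-statement for $Y$ in the voters $v_3,\dots,v_{\ell+2}$, each of which vetoes $c_2 \neq p$; hence the negative restriction is automatically respected and the same reduction transfers verbatim to the negative case.

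For $A \in \{IV, IV+DV\}$, I would build a new reduction from \SpecialPartition{}. Given $\mathcal{A} = \{a_1,\dots,a_\ell\}$ with $\sum_{a\in\mathcal{A}} a = 2\psi$, construct an election on two binary issues $X,Y$ with candidates $p = xy$, $c_2 = \overline{x}y$, $c_3 = \overline{x}\overline{y}$ and $c_4 = x\overline{y}$. Every CP-net is deliberately chosen to be dependency-free so that $IV$-flips exhaust all available bribery actions and the cases $IV$ and $IV+DV$ coincide. The voters are: $v_1$ of weight $\psi$ with cp-statements $\overline{x}>x$ and $\overline{y}>y$ (hence vetoing $p$); $v_2$ of weight $\psi$ with $\overline{x}>x$ and $y>\overline{y}$ (vetoing $c_4$); and for each $i\in\{1,\dots,\ell\}$ a voter $v_{i+2}$ of weight $a_i$ with $x>\overline{x}$ and $\overline{y}>y$ (vetoing $c_2$). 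The total weight is $4\psi$, the initial scores are $p:3\psi$, $c_2:2\psi$, $c_3:4\psi$, $c_4:3\psi$, and flipping the $Y$-cp-statement of a voter $v_{i+2}$ shifts its veto from $c_2$ to $c_3$. If the chosen subset of flipped voters has total weight exactly $\psi$, all four candidates tie at $3\psi$ and $p$ becomes a co-winner, matching precisely the existence of a subset of $\mathcal{A}$ of sum $\psi$.

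The main obstacle is to rule out every alternative bribery, and this is where a careful case distinction is needed. The negative restriction forbids bribing $v_1$ at all since it already vetoes $p$; flipping $X$ in any $v_{i+2}$ relocates its veto onto $p$, which only hurts $p$; flipping $Y$ in $v_2$ likewise moves its veto onto $p$; flipping $X$ in $v_2$ transfers its veto to $c_3$ but leaves $c_4$ unopposed at score $4\psi$, which strictly exceeds the best score $p$ can reach given the persistent veto of $v_1$. A short linear inequality argument over the total weights of $Y$-flips, $X$-flips and two-issue-flips shows that any mixed strategy still leaves at least one candidate strictly above $p$, so every successful bribery must consist of $Y$-flips on a subset of $\{v_3,\dots,v_{\ell+2}\}$ of total weight exactly $\psi$. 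Since the reduction assigns no explicit costs, it works with unlimited budget for every cost scheme in $\{\Cequal,\Cany,\Clevel,\Cdist,\Cflip\}$ and for arbitrary cost vector ${\bf Q}$, completing the proof.
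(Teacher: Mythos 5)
Your proposal is correct and follows the paper's overall strategy: the $DV$ case is settled by observing that the reduction of Theorem~\ref{thm:ov-weighted} never moves a veto off $p$ and hence already respects the negative restriction, and the cases $IV$ and $IV+DV$ are settled by a fresh reduction from \SpecialPartition{} with four candidates whose veto weights (totalling $4\psi$) must all equalize at $\psi$ for $p$ to become a co-winner. The one substantive difference lies in the $IV$/$IV+DV$ gadget: the paper keeps the dependency $X\rightarrow Y$ in the partition voters and lets a single flip of the independent issue $X$ cascade so that the veto jumps from $\overline{x}y$ to $x\overline{y}$, whereas you make every CP-net dependency-free, so that $IV$ and $IV+DV$ coincide by construction and the relevant bribery is a plain $Y$-flip moving the veto from $\overline{x}y$ to $\overline{x}\overline{y}$. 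Your variant buys a more transparent case analysis (no reasoning about which cp-statement of a dependent issue is the one governing the veto), at the price of having to discuss multi-issue flips explicitly. On that point, one small overstatement: your conclusion that \emph{every} successful bribery consists solely of $Y$-flips on a weight-$\psi$ subset of $\{v_3,\dots,v_{\ell+2}\}$ is not quite true --- the briber could instead flip $X$ in $v_2$ (relocating its veto from $c_4$ to $c_3$) and double-flip a subset of the partition voters onto $c_4$; however, the conservation argument (total veto weight $4\psi$ spread over four candidates, each of which must carry at least the weight $\psi$ that irrevocably sits on $p$) forces that alternative to isolate a subset of weight exactly $\psi$ as well, so the only-if direction, and hence the reduction, remains correct.
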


\begin{proof}
Theorem~\ref{thm:weighted-OV-negative} follows directly from Theorem~\ref{thm:ov-weighted} for the bribery action~$DV$. For the bribery actions~$IV$ and~$IV+DV$, we use a similar reduction as in the proof of Theorem~\ref{thm:ov-weighted}. Let $I_P$ be an instance of \SpecialPartition{} with $\mathcal{A}= \{a_1, \dots, a_{\ell}\}$ and $\sum_{a\in \mathcal{A}} a = 2\summe$. The instance $I_B$ of \textsc{Weighted-$(OV,A,C)$-negative-Bribery} that is constructed from $I_P$ has two issues~$X$ and~$Y$ with domains~$\{x,\overline{x}\}$ and ~$\{y,\overline{y}\}$, respectively, and therefore the following four candidates:

\begin{description}
\item[$xy$] The preferred candidate $p$ of the briber. There is one voter with weight~$\summe$ casting his veto against him.
\item[$\overline{x}y$] The candidate $c_2$, who starts with a weighted sum of $2\cdot\summe$ vetos.
\item[$x\overline{y}$] The candidate $c_3$, against whom no one casts a veto initially, but who will win together with the other 3 candidates, if there exists a partition.
\item[$\overline{x}\overline{y}$] An unimportant clone of $p$.
\end{description}

\noindent We achieve this by constructing the following voters:

\begin{center}
\begin{tabular}{ccccc}
\toprule 
voter~ & top candidate~ & dependency~ & weight~ & vetoes against\\
\cmidrule{1-5}
$v_1$ & $\overline{x}\overline{y}$ & none & $\summe$ & $p$  \\
$v_2$ & $xy$ & none & $\summe$ & $u$\\ 
$v_3$ & $x\overline{y}$ & $X\rightarrow Y$ & $a_1$ & $c_2$ \\ 
$v_4$ & $x\overline{y}$ & $X\rightarrow Y$ & $a_2$ & $c_2$ \\ 
\multicolumn{5}{c}{\vdots}\\ 
$v_{\ell+2}$ & $x\overline{y}$ & $X\rightarrow Y$ & $a_{\ell}$ & $c_2$ \\
\bottomrule
\end{tabular}
\end{center}

\noindent Note that due to the bribery action $IV$, the briber is only allowed to bribe the independent issue $X$ for every voter except $v_1, v_2$, so he can only transfer their vetos from $c_2$ to $c_3$. He is not allowed to bribe voter~$v_1$ in the case of negative bribery, and it is of no use to him to bribe voter~$v_2$. The best he can do is to split the vetoes of $c_2$ among $c_2$ and $c_3$, in which case they win together with $p$ and $u$. In every other case $p$ loses. With the same argument as in the proof of Theorem~\ref{thm:weighted-op}, the two instances are equivalent. Since no costs are involved here, this works for every cost scheme considered, assuming an unlimited budget.\\
The same holds for the bribery action~$IV+DV$. Here, the briber could bribe the voters $v_i,\,  i\geq 2$ to cast their veto against $c_2$, $c_3$, and $u$, but this does not help because the candidate with the fewest vetos wins and there is a total sum of $3\summe$ vetos to distribute among $u,\, c_2,\text{ and } c_3$.
\end{proof}

Last, we show that \textsc{Weighted-$(SM,A,C)$-negative-Bribery} is computationally hard, independent of the cost scheme that is used.

\begin{theorem}\label{thm:weighted-sm-negative}
\textsc{Weighted-$(SM,A,C)$-negative-Bribery} with a bribery action $A\in\{IV,DV,IV+DV\}$ and any cost scheme~$C$ is \NP-complete.
\end{theorem}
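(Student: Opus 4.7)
Membership in \NP{} is immediate: a bribery can be guessed in polynomial space and verified in polynomial time, with the additional check that no bribed voter's new top candidate equals~$p$. For \NP-hardness, my plan is to reduce from \SpecialPartition{}, reusing constructions from the positive weighted case of Subsection~\ref{subsec:weighted-pos} whenever possible. For the cost scheme $\Cequal$, the statement follows directly from Theorem~\ref{thm:unweighted-sm-cequal-negative} by assigning unit weight to every voter, so the unweighted negative version is already a special case of the weighted one.

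For each remaining cost scheme $C \in \{\Cflip, \Clevel, \Cdist, \Cany\}$, I would reduce from a \SpecialPartition{}-instance $(\mathcal{A},\summe)$ with $\sum_{a\in\mathcal{A}} a = 2\summe$ by reusing Mattei \textit{et al.}'s construction from~\cite[Theorem 6]{mattei2013bribery}: two issues $X,Y$ (with $Y$ depending on $X$ in the $DV$ variant), the preferred candidate $p=xy$ supported by a single $\summe$-weighted voter, $\ell = |\mathcal{A}|$ additional voters with weights $a_1,\dots,a_\ell$ whose shared top candidate $c_2$ differs from $p$, and an \emph{escape} candidate $c_3 \neq p$ that absorbs the bribed votes. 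The individual costs $Q[i]$ (or, for $\Cany$, the analogous per-flip costs) are set exactly as in the original proof so that a bribery stays within budget if and only if $\mathcal{A}$ admits an equal partition. The crucial observation for the negative case is that in all these constructions the only useful bribery moves some voters supporting $c_2$ so that they end up supporting $c_3$, where $c_2, c_3 \neq p$; no voter is ever bribed to switch his top candidate to $p$, so the negative-bribery restriction is automatically respected by every feasible bribery. For bribery action $IV+DV$, the same trick used in Corollary~\ref{cor:weighted_rest} of adding an extra dummy voter preferring a fourth candidate prevents the briber from combining $IV$- and $DV$-flips into a scheme that would create additional $p$-voters cheaply.

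The main obstacle will be handling $IV+DV$ uniformly across the four non-trivial cost schemes: one must verify that adding the dummy voter does not give the briber a cheaper escape, and that no scheme exploiting the combined flexibility of flipping both independent and dependent cp-statements can beat the partition-style bribery, in particular not by routing voters through $p$ and back. Once this verification is performed cost scheme by cost scheme, the equivalence of the constructed \textsc{Weighted-$(SM,A,C)$-negative-Bribery} instance with the original \SpecialPartition{} instance follows in the standard `both directions' manner, completing the reduction.
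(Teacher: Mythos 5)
Your treatment of $\Cequal$ via Theorem~\ref{thm:unweighted-sm-cequal-negative} is fine, but for the remaining cost schemes there is a genuine gap. You propose to reuse the positive weighted $SM$ construction of Mattei \textit{et al.}, in which the partition structure is encoded through the cost vector ${\bf Q}$ and the budget, and you assert that ``no voter is ever bribed to switch his top candidate to $p$, so the negative-bribery restriction is automatically respected.'' That assertion is exactly what needs proof and is doubtful for a direct reuse: under $SM$'s issue-by-issue semantics the briber must flip enough weighted voters so that $p$'s value wins the majority vote on a contested issue, and a bribed voter whose top candidate differs from $p$ only in that issue ends up with $p$ on top afterwards---which the negative variant forbids. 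You never verify that your escape candidate $c_3$ actually absorbs the bribed voters, and you explicitly leave $IV+DV$ unresolved (the dummy-voter trick you import from Corollary~\ref{cor:weighted_rest} is designed for the one-step rule $OP$ and does not transfer to sequential majority).

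The paper's proof takes a different route that your proposal misses: it uses \emph{no costs at all}. It builds three issues $X,Y,Z$ (only two, both independent, in the $IV$ and $IV+DV$ cases), gives $p=xyz$ a single voter of weight $1$, and lets the $a_i$-weighted voters all have top candidate $x\overline{y}\overline{z}$. To make $p$ win, the briber must win the majority on $Y$ and on $Z$ separately, each requiring bribed weight at least $\summe$ out of the total pool of $2\summe$; and---this is the key point---a voter bribed in both $Y$ and $Z$ would end up with top candidate $p$, which the negative constraint forbids. Hence the two bribed sets must be disjoint and each of weight exactly $\summe$, i.e., a solution to \SpecialPartition{}. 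Because no costs are involved, the argument is uniform over all cost schemes and even holds with $Q[i]=1$ for all $i$, something a cost-based reduction cannot achieve for $\Clevel$, since weighted positive $SM$ with $\Clevel$ and unit cost vector is in \Pclass{} by Theorem~\ref{thm:weighted-sm-clevel}. If you want to salvage your approach, you must redesign the instance so that the negative constraint itself, rather than the budget, forces the partition.
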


\noindent The proof uses a reduction from \SpecialPartition{} where no costs are involved. 

\begin{proof}
We prove Theorem~\ref{thm:weighted-sm-negative} by reduction from \SpecialPartition.
 Let $I_P$ be an instance of \SpecialPartition{} with $\mathcal{A}= \{a_1, \dots, a_{\ell}\}$ and $\sum_{a\in \mathcal{A}} a = 2\summe$. The instance $I_B$ of \textsc{Weighted-$(SM,A,C)$-negative-Bribery} that is constructed from $I_P$ has three issues $X,\,  Y,\,  Z$ with domains~$\{x, \overline{x}\}, \,  \{y, \overline{y}\}, \, \{z, \overline{z}\}$, respectively, and therefore eight candidates. For the bribery action $DV$, only the following four of them are important:

\begin{description}
\item[$xyz$] The preferred candidate $p$ of the briber. There is one voter voting for him weighted with $1$.
\item[$xy\overline{z}$] The candidate $c_2$, for whom no one votes initially.
\item[$x\overline{y}z$] The candidate $c_3$, for whom no one votes initially, neither. The two candidates $c_2$ and $c_3$ serve as the two partitions.
\item[$x\overline{y}\overline{z}$] The candidate $c_4$, for whom almost everyone votes initially, but who will have no voter voting for him if $p$ wins.
\end{description}\noindent We achieve this by constructing the following voters:

\begin{center}
\begin{tabular}{ccccc}
\toprule 
voter~ & top candidate~ & dependency~ & weight~ & voting for\\
\cmidrule{1-5}
$v_1$ & $xyz$ & none & $1$ & $p$  \\ 
$v_2$ & $x\overline{y}\overline{z}$ & $X\rightarrow Y,X\rightarrow Z$ & $a_1$ & $c_4$ \\ 
$v_3$ & $x\overline{y}\overline{z}$ & $X\rightarrow Y,X\rightarrow Z$ & $a_2$ & $c_4$ \\ 
\multicolumn{5}{c}{\vdots}\\ 
$v_{\ell+1}$ & $x\overline{y}\overline{z}$ & $X\rightarrow Y,X\rightarrow Z$ & $a_{\ell}$ & $c_4$ \\
\bottomrule
\end{tabular}
\end{center}

\noindent Note that the briber is only allowed to bribe the dependent issues $Y$ or $Z$ for every voter except $v_1$, so he cannot transfer points from $c_4$ to $p$. He can just split the points of $c_4$ equally among $c_2$ and $c_3$, in which case $p$ wins. In every other case $p$ loses. As in the proof of Theorem~\ref{thm:weighted-op}, one can show that the two instances are equivalent. As no costs are involved here, this works for every cost scheme considered, assuming an unlimited budget.\newline

For the case $IV$ we use a slightly different construction. In this case we do not need the issue $X$, which makes the issues $Y$ and $Z$ independent with the reduction still working. The latter version has only independent issues, so this reduction works as well for the bribery action $IV+DV$.
\end{proof}

\section{Results and discussion}\label{sec:results}
Our results are summarized in Table~\ref{table:results}.

The bribery problem and the variants of microbribery, nonuniform bribery, and swap bribery in the `classical' setting of unconditional preferences given as linear orders are tractable for the voting systems plurality and veto (\cite{faliszewski2009hard}, \cite{FHHR09}, \cite{faliszewski2008nonuniform}, \cite{elkind2009swap}). One might expect that bribing turns out to be more difficult in the case of conditional preferences and as soon as more complex cost schemes are used, but it does not. 
For the {\it non-negative unweighted} version of the bribery problem, Mattei~{\it et al.}~\cite{mattei2013bribery} obtained several tractability results (see Table~\ref{table:results}). We could solve the remaining unknown complexities for the cost schemes $\Cany$ and $\Clevel$. Contrary to the conjecture of Mattei~\textit{et al.}~\cite{mattei2013bribery}, it turned out that the bribery problem is easy in these cases as well.\\
These easiness results could be explained by the fact that $OP$ and $OV$ only require very little information on the voters' (conditional) preferences. But, more importantly, using a one-step voting rule, the bribery of one CP-net does not have an influence on other CP-nets. This is different for the sequential voting rule~$SM$. If the value of an issue is changed due to bribery of one CP-net, the cp-statements of dependent issues in other CP-nets are concerned as well, and this---in combination with the cost scheme---can make the problem potentially hard.

The interesting case might be the voting rule~$OK$: In the classical setting, the bribery problem is polynomially solvable for $k$-approval elections~\cite{faliszewski2009hard}, whereas {\sc Swap Bribery} is \NP-complete for $k\geq 2$~\cite{BD10, elkind2009swap}. So far, for bribery in CP-nets, only results for the special case of $OK$ for $\mathcal{O}$-legal profiles  and where~$k$ is a power~$j$ of 2 (denoted by $OK^*$ by Mattei \textit{et al.}) are known; it was shown by Mattei~\textit{et al.}~\cite{mattei2013bribery} that the bribery problem is solvable in polynomial time then. This
is due to the fact that in those cases, a voter always approves one \emph{package} of~$k$ candidates out of $2^{m-j}$ such packages, which are all fixed and disjoint. It is so to speak just a slightly different version of~$OP$. It would hence be interesting to investigate the computational complexity for other values of~$k$. \\

For the {\it non-negative weighted} case, we could show that finding an optimal bribery is \NP-complete for $OP$, $OV$, and $OK$ for all considered cost schemes. However, not all bribery actions are covered yet. For $OP$, $OV$ and $OK$, the computational hardness is due to the weights which enforce that a partition problem has to be solved---this is typical for the weighted variant of a problem, cf.\ the weighted versions of the original family of bribery problems in the work of  Faliszewski~\textit{et al.}~\cite{faliszewski2009hard}. However, it is interesting to see that weighted voters do not necessarily make the problem hard---this only holds in the negative case where we could show \NP-completeness for all considered variants of the problem. We have seen that the complexity for $SM$ in the non-negative case depends on the choice of the cost scheme and the cost vector, not only on the weights.\\
The complexity for $SM$ with cost scheme $\Cdist$ remains unsolved for the unweighted cases. \\

{\small
\begin{table}[htbp]
\caption{Complexity results for variants of the bribery problem in CP-nets. We distinguish solvability in polynomial time (\Pclass{}) and \NP-completeness (\NP-c). The given results all hold for the bribery actions $IV$, $DV$, and $IV+DV$, except for the results in the rows marked with~\dag, which are only shown for bribery actions $IV$ and $DV$ so far, and the ones in the row marked with~$\ddag$, which are only shown for bribery action $DV$. Results in bold face are obtained in this paper, the results in light typeface are due to Mattei~\textit{et al.}~\cite{mattei2013bribery}. $OK^*$ is the special case of $OK$ when~$k$ is a power of 2 and an $\mathcal{O}$-legal profile is given. 
The results marked with $^\lozenge$ are partly shown by Mattei~\textit{et al.}~\cite{mattei2013bribery}; they show the result only for the bribery case $IV$, in the paper on hand it is shown to hold for $IV$, $DV$ and $IV+DV$.  The cases labeled with more than one complexity class can be solved in polynomial time if the cost vector~${\bf Q}$ contains a~$1$ for each voter, and are \NP-complete for arbitrary cost vectors. In all the remaining tractable cases, the corresponding problem remains in \Pclass{} even for arbitrary cost vectors, while all of our hardness results still hold with $ Q[i]=1$ for all $1\leq i \leq n$.\label{table:results}}

\centering
\scalebox{0.86}{
\begin{tabular}{rlcccccc}
\toprule 
 & & $\Cequal$ & $\Cflip$ & $\Clevel$ & $\Cany$ & $\Cdist$ & \\
\cmidrule{2-7}
 & SM & \NP-c & \Pclass & \Pclass & \Pclass & ? & \\
 & OP & \Pclass & \Pclass & \phantom{$^\lozenge$}\POurResult{cor:op}$^\lozenge$ & \POurResult{thm:OP_Cany} & \Pclass & \texttt{Cor.\ref{cor:op}/Thm.\ref{thm:OP_Cany}}\\
 & OV & \Pclass & \Pclass & \phantom{$^\lozenge$}\POurResult{cor:ov_Clevel}$^\lozenge$ & \POurResult{thm:OV_Cany} & \Pclass & \texttt{Cor.\ref{cor:ov_Clevel}/Thm.\ref{thm:OV_Cany}}\\
 & OK* & \Pclass & \Pclass & \phantom{$^\lozenge$}\POurResult{cor:OK}$^\lozenge$ & \POurResult{cor:OK} & \Pclass & \texttt{Cor.\ref{cor:OK}}\\
\cmidrule{2-7}
weighted~ & SM & \NP-c & \Pclass{}, \NP-c & \POurResult{thm:weighted-sm-clevel}, \NP-c & \NP-c & \NP-c  & \texttt{Thm.\ref{thm:weighted-sm-clevel}}\\
 & OP\dag & \NPcOurResult{thm:weighted-op} & \NPcOurResult{thm:weighted-op} & \NPcOurResult{thm:weighted-op} & \NPcOurResult{thm:weighted-op} & \NPcOurResult{thm:weighted-op} & \texttt{Thm.\ref{thm:weighted-op}}\\
 & OV$\ddag$ & \NPcOurResult{thm:ov-weighted} & \NPcOurResult{thm:ov-weighted} & \NPcOurResult{thm:ov-weighted} & \NPcOurResult{thm:ov-weighted} & \NPcOurResult{thm:ov-weighted} & \texttt{Thm.\ref{thm:ov-weighted}}\\
 & OK\dag & \NPcOurResult{cor:weighted_OK} & \NPcOurResult{cor:weighted_OK} & \NPcOurResult{cor:weighted_OK} & \NPcOurResult{cor:weighted_OK} & \NPcOurResult{cor:weighted_OK} & \texttt{Cor.\ref{cor:weighted_OK}}\\
\cmidrule{2-7}
 negative~ & SM & \NPcOurResult{thm:unweighted-sm-cequal-negative} & \POurResult{thm:unweighted-sm-negative} & \POurResult{thm:unweighted-sm-negative} & \POurResult{thm:unweighted-sm-negative} & ? & \texttt{Thm.\ref{thm:unweighted-sm-cequal-negative}/\ref{thm:unweighted-sm-negative}}\\
 & OP & \POurResult{thm:unweighted-op-negative} & \POurResult{thm:unweighted-op-negative} & \POurResult{thm:unweighted-op-negative} & \POurResult{thm:unweighted-op-negative} & \POurResult{thm:unweighted-op-negative} & \texttt{Thm.\ref{thm:unweighted-op-negative}}\\
 & OV & \POurResult{thm:unweighted-ov-negative} & \POurResult{thm:unweighted-ov-negative} & \POurResult{thm:unweighted-ov-negative} & \POurResult{thm:unweighted-ov-negative} & \POurResult{thm:unweighted-ov-negative} & \texttt{Thm.\ref{thm:unweighted-ov-negative}}\\
 & OK* & \POurResult{thm:unweighted-ok-negative} & \POurResult{thm:unweighted-ok-negative} & \POurResult{thm:unweighted-ok-negative} & \POurResult{thm:unweighted-ok-negative} & \POurResult{thm:unweighted-ok-negative} & \texttt{Thm.\ref{thm:unweighted-ok-negative}}\\
\cmidrule{2-7}
weighted~ & SM & \NPcOurResult{thm:weighted-sm-negative} & \NPcOurResult{thm:weighted-sm-negative} & \NPcOurResult{thm:weighted-sm-negative} & \NPcOurResult{thm:weighted-sm-negative} & \NPcOurResult{thm:weighted-sm-negative} & \texttt{Thm.\ref{thm:weighted-sm-negative}}\\
negative & OP & \NPcOurResult{cor:weighted_rest} & \NPcOurResult{cor:weighted_rest} & \NPcOurResult{cor:weighted_rest} & \NPcOurResult{cor:weighted_rest} & \NPcOurResult{cor:weighted_rest} & \texttt{Cor.\ref{cor:weighted_rest}}\\
 & OV & \NPcOurResult{thm:weighted-OV-negative} & \NPcOurResult{thm:weighted-OV-negative} & \NPcOurResult{thm:weighted-OV-negative} & \NPcOurResult{thm:weighted-OV-negative} & \NPcOurResult{thm:weighted-OV-negative} & \texttt{Thm.\ref{thm:weighted-OV-negative}}\\
 & OK & \NPcOurResult{cor:weighted_rest} & \NPcOurResult{cor:weighted_rest} & \NPcOurResult{cor:weighted_rest} & \NPcOurResult{cor:weighted_rest} & \NPcOurResult{cor:weighted_rest} & \texttt{Cor.\ref{cor:weighted_rest}}\\
\bottomrule

\end{tabular}}

\end{table}
}

Summarizing, the unweighted versions do seem particularly appealing for election campaign management due to their tractability. The only exception is $SM$, depending on the cost scheme used. The weighted versions of $SM$ are all \NP-complete, with tractability for the cost schemes $\Cflip$ and $\Clevel$ if weighted $SM$ is used without individual voter costs. Of course, \NP-completeness results for the considered problems only constitute a worst case analysis and therefore cannot guarantee resistance against manipulative actions. However, they can help in acquiring a better understanding of the structure of the underlying problems and therefore may contribute in finding heuristic approaches to practically deal with them, or provide a structural decomposition for further investigations from the point of view of parameterized complexity~\cite{niedermeier2006invitation}, which again is a desired property in the setting of campaign management. \\

In respect of future research, we hope that our work contributes in creating a more extensive understanding of the nature of voting with CP-nets. The landscape of complexity in the `classical' setting where voters have unconditional preferences given as linear orders over the candidates is already quite elaborate, and it would be interesting to obtain a similar overview of the complexity of different voting problems for the CP-net setting as well.
This includes the study of additional voting rules and other common voting problems. 
Considering different voting problems for the setting of CP-nets such as the manipulation problem as initiated by Mattei~\cite{mattei2012decision} or election control will be of value for measuring vulnerability and resistance of voting in CP-nets.

Another interesting extension is the case that the dependencies of some issues are linked to those of other voters, as proposed in the setting of $m$CP-nets by Rossi~\textit{et al.}~\cite{rossi2004mcp}. In Example~\ref{ex:friends}, Bob might prefer where to go depending on Alice's choice of the destination, or even on Alice's preference when to go or what to do in the holiday.
Mattei~\textit{et al.}~\cite{mattei2013bribery} have also suggested to allow the briber to create dependencies instead of only deleting them, and to pay voters to create preferences that are conditioned by those of other voters.

\bibliographystyle{abbrv}
\bibliography{literatur}
\end{document}